\renewcommand{\ALG@name}{Pseudo-algorithm}
\newtheorem{theorem}{Theorem}
\newtheorem{proposition}{Proposition}
\theoremstyle{definition}
\newtheorem{remark}{Remark}
\newtheorem*{comment*}{Comment}
 \def\norm#1{\left \vert #1 \right \vert}
 \def\Norm#1{\left \Vert #1\right \Vert}
\def\cvarGamma{c_5}
\def\cdelta{\gamma_1}
\def\cderdelta{\gamma_2}
\def\cU{c_6}
 \title{
A Kustaanheimo-Stiefel regularization of the elliptic restricted three-body problem and the detection of close encounters with fast Lyapunov indicators
 }
\author{Mattia Rossi\textsuperscript{1,2} and Massimiliano Guzzo\textsuperscript{1}\\
	\small{\textsuperscript{1}Università degli Studi di Padova}\\
	\small{Dipartimento di Matematica ``Tullio Levi-Civita''}\\
	\small{Via Trieste, 63 - 35121 Padova, Italy}\\
	\small{mrossi@math.unipd.it, guzzo@math.unipd.it	
		\vspace{1mm}}\\	
	\small{\textsuperscript{2}Università degli Studi di Genova}\\
	\small{MIDA -- Dipartimento di Matematica}\\
	\small{Via Dodecaneso, 35 - 16146 Genova, Italy}\\
	\small{rossi.ma@dima.unige.it}}
\date{\today}
\begin{document}

\maketitle

\begin{abstract}
We present the Kustaanheimo-Stiefel (KS) regularization of the elliptic restricted three-body problem (ER3BP) at the secondary body $P_2$, and discuss its use to study a category of transits through its Hill's sphere (fast close encounters). Starting 
from the Hamiltonian representation of the problem using the synodic 
rotating-pulsating reference frame and the  true anomaly  of $P_2$ as independent variable, we perform the regularization at the secondary body analogous to the circular case by applying the classical KS transformation and the iso-energetic reduction in an extended 10-dimensional phase-space. Using such regularized Hamiltonian we recover a definition of fast close encounters in the ER3BP for small values of the mass parameter $\mu$ (while we do not
require a smallness condition on the eccentricity of the primaries), and we show that for these encounters the solutions of the variational equations are characterized by an exponential growth during the fast transits through the Hill's sphere. Thus, for small $\mu$, we
justify the effectiveness of the regularized fast Lyapunov indicators (RFLIs) to detect orbits with multiple fast close encounters. Finally, we provide numerical demonstrations and show the benefits of the regularization in terms of the computational cost.\\

\noindent\textbf{Keywords:} Celestial Mechanics -- Astrodynamics -- Elliptic Restricted 3-Body Problem -- Kustaanheimo-Stiefel regularization --
Fast Lyapunov Indicators -- Hill's sphere
\end{abstract}

\section{Introduction}
\label{sec:intro}
The regularization of the gravitational singularities, appeared at the beginning of the XXth century, has become in the last decades an extremely useful technique to deal with the numerical integration of the N-body problems. Particularly two kinds of regularization techniques are widely known: a geometric one, which basically aims at modifying the equations of motion such that they are defined and regular even on the singularities, and a solution-based one, whose goal consists in an analytic continuation of the original solution through the singular point. In this paper we focus precisely on a celebrated example of the former category, i.e., the Kustaanheimo-Stiefel regularization for a special case of utmost importance, represented by the restricted three-body problem, originally in its circular (CR3BP) and then elliptic (ER3BP) variant.

In a seminal paper \cite{LC1906} Levi-Civita performed a local\footnote{In case of multiple singularities the term refers to the deletion of only one of them at a time, as opposed to a global method, mainly due to Birkhoff \cite{Birkhoff:restricted}.} regularization of the planar CR3BP, which relies on the conservation of the so-called Jacobi integral, through the introduction of canonical transformations and a time reparametrization that nowadays are known, after his name, as Levi-Civita (LC) regularization. The issue for the spatial CR3BP was solved by Kustaanheimo and Stiefel in the mid-1990s \cite{K64, KS65}. The latter procedure is more complicated than the LC one since it exploits a projection map from a space of four redundant variables to the three-dimensional Cartesian space. Both LC and KS regularizations are {\it iso-energetic}, since they exploit the existence of a global first integral, the so-called Jacobi constant. 
There exist in the literature many uses of the KS transformation to regularize binary collisions in the general 3-body and
 N-body problems, as well as perturbations of the Kepler problem whose definitions include
 restricted problems more general than the CR3BP and ER3BP (see \cite{StiefelScheile1971} and, for example,  \cite{aarseth-zare,heggie,shefer1990,Falcolini,aarseth2003,Waldvogel:2006,langnerbreiter2015,breiter2017} and references therein). For the ER3BP (and for more general restricted problems) a 10-dimensional phase-space is required. In fact, in addition to the 8-dimensional phase-space of
 the KS variables and an additional variable corresponding to the physical time,
 another variable corresponding to the energy of the
 system (variable in time for the ER3BP) is required. While
 these approaches include the regularizations of the ER3BP, the geometric properties of
 this system provide specific representations which are more adapted, for example, to the computation of ejection-collision orbits or the dynamics 
 at the Lagrangian points (see, e.g., \cite{SZ:thorb,PG21}) and, as we consider in
 this paper, the study of fast close encounters with the secondary body $P_2$.  

We remark that specific approaches to the regularization of the ER3BP
have been developed also using regularizations different from the KS one. Formulations in this regard have been derived in \cite{SZ:thorb,Szebehely:regER3BP} and applied in \cite{Broucke,Pinyol} for the planar setting; in \cite{Waldvogel:regER3BP,Waldvogel:NASA,Arenstorf:regER3BP,Llibrepinol} for the spatial one.\\

A convenient formulation of the ER3BP uses a synodic rotating-pulsating reference frame and the  true anomaly  of the secondary body as independent variable \cite{SZ:thorb}. Consider an ER3BP defined by the motion of a body $P$ of negligible mass in the gravitation field of two massive bodies $P_1$ (the primary) and $P_2$ (the secondary) performing an elliptic Keplerian motion of eccentricity $\varepsilon\in (0,1)$. As usual, the simplifying assumptions on the units correspond to setting $m_1=1-\mu$, $m_2=\mu$ for $\mu\in(0,1/2)$ as the masses of $P_1,P_2$ respectively, while $a=1$ and $T=2\pi$ are the semi-major axis and the period of the elliptic motion.
By denoting with $(x,y,z)$, $(p_1,p_2,p_3)$ the coordinates of $P$ and their conjugate momenta, and with $f$ the true anomaly of the elliptic motion of $P_2$ which is used as independent variable, the Hamiltonian reads:
\begin{multline}
\label{eqn:ER3BP3Dhamfintro}
\mathcal{H}(x,y,z,p_1,p_2,p_3,f)=\frac12(p_1^2+p_2^2+p_3^2)+p_1y-xp_2\\
-\frac{1}{1+\varepsilon\cos f}\left(\frac{1-\mu}{d_1}+\frac{\mu}{d_2}-\frac{1}{2}(x^2+y^2+z^2)\varepsilon\cos f\right)\;,
\end{multline}
where $d_1=\Vert P-P_1\Vert$, $d_2=\Vert P-P_2\Vert$. In this paper we first represent the KS regularization at the secondary body of the Hamiltonian  (\ref{eqn:ER3BP3Dhamfintro}) by applying the classical KS transformation and the iso-energetic reduction in the extended 10-dimensional phase-space. First, we provide indeed a simple and self-consistent proof
on the projection of the solutions of the regularized Hamiltonian to the original solutions, by adapting to the elliptic case a derivation of the KS Hamiltonian of the CR3BP given in \cite{GuzzoCardin:intCR3BP}. Then, we also assess the effectiveness of the regularization for numerical integrations, and the advantage in terms of numerical performances, in a fictitious simple scenario which is nevertheless representative (for the choice of the initial conditions) of realistic close encounters in the Solar System, such as the non-coplanar close encounters with Jupiter, in the Sun-Jupiter ER3BP. Finally, we use the regularized Hamiltonian to extend the definition of fast close encounters to the ER3BP, and justify the effectiveness of the RFLIs to detect orbits with multiple of such fast close encounters.\\

We call `close encounter' a transit of a solution $(x(f),y(f),z(f))$ of the ER3BP through the Hill's sphere of $P_2$ occurring in any finite interval $[f_1,f_2]$ of the true
anomaly $f$ (collision solutions are not included). We consider values of the mass parameter $\mu\leq 1/10$ and we define the Hill's sphere of $P_2$ by:
$$
B(\mu^{1\over 3})= \{(x,y,z)\in {\Bbb R}^3\colon d_2< \mu^{1\over 3}\}\;.
$$
Notice that this definition of Hill's sphere is different from the
conventional one by a numerical factor:
the radius $\mu^{1\over 3}$ corresponds to $3^{1\over 3}r_h$, where $r_h$ is the usual Hill's radius. A close encounter occurring for $f\in [f_1,f_2]$ 
satisfies $d_2(f_1)=d_2(f_2)=\mu^{1\over 3}$ as well as $0< d_2(f) < \mu^{1\over 3}$ for all $f\in (f_1,f_2)$. 

We consider the category, critical for the numerical integrations, of fast close encounters, generalizing the conditions of fast close encounters
which are given for the CR3BP. 
 We recall that  fast close encounters are
frequently observed for celestial bodies in the Solar System (see for example \cite{GL15,GL17}, where the dynamics of comet 67P Churyumov-Gerasimenko, target of the recent Rosetta mission, is discussed in detail), and they are important also for the study of the risk of impact of asteroids on the Earth, as well as for the technique of gravity assist to change the energy of a spacecraft.

Fast close encounters are easily indentified in the CR3BP by the encounters occurring for values of the Jacobi constant ${\cal C}$ satisfying:
\begin{equation}
\gamma \coloneqq {3- 4 \mu +\mu^2-{\cal C}\over 2}>0\;,
\label{JacobiC}
\end{equation}
with the exclusion of a neighborhood of $\gamma=0$. The condition
(\ref{JacobiC}) (or similar ones), which appeared in several studies of close encounters with Levi-Civita regularization (see, e.g., \cite{henrard,FNS,GL2013,GKZ,Guzzo}) as well as in the heuristic approach known as \"Opik's theory (see \cite{opik}, revised in \cite{valsecchi2002}),  
is understood by representing the Hamiltonian of the planar CR3BP, Levi-Civita regularized at $P_2$, in the form:
\begin{equation}
\label{eqn:hamLCregintro6}
\mathcal{K}(u,U)=\frac{1}{8}\Vert U-b(u)\Vert^2
-\Vert {u}\Vert^2\left ({3-4 \mu+\mu^2 -{\cal C}\over 2}
\right)-\mu  +\mathcal{R}_6(u)\;,
\end{equation}
where $(u_1,u_2)$ are the Levi-Civita coordinates and $U=(U_1,U_2)$ are the momenta conjugate to $u=(u_1,u_2)$ defined as in \cite{LC1906}; $b(u)=(b_1(u),b_2(u))$ is cubic in the $u$; $\mathcal{R}_6(u)$ is regular at $u=0$ with Taylor expansion which begins with order 6. If $\gamma>0$, the coefficient of $\Norm{u}^2$ in the Levi-Civita Hamiltonian \eqref{eqn:hamLCregintro6} is strictly negative, allowing to use methods of hyperbolic dynamics to study the fast close encounters of the
CR3BP, as it was done in \cite{henrard,FNS,GKZ} with analytic methods, and in
\cite{GL2013} to study the effectiveness of RFLIs to detect orbits with multiple close enclounters in the planar CR3BP.
We remark that, by considering the higher order corrections to the quadratic approximation of the Hamiltonian, a neighborhood of the limit case $\gamma=0$ should be avoided. Alternative definitions of fast close encounters,
as in \"Opik's theory, refer to the hyperbolic approximations
of the Cartesian solutions which are obtained by considering the Keplerian motion defined by the secondary body $P_2$. 

Regularized fast Lyapunov indicators have been introduced in \cite{CLSF,LGF11,GL2013} (see also \cite{GL23}, and refeences therein) and have been used in \cite{GL2013,LG16,GL18}
to detect the several kinds of close encounters with the secondary body $P_2$. 
The ability of FLIs to detect fast close encounters of the planar CR3BP \cite{REReleg,RECharac}, i.e., for $\gamma>0$, has been related to the exponential growth of tangent vectors for orbits transiting suitably
fast in the Hill's sphere of the planet \cite{GL2013}. In fact, the Jacobian matrix ${\cal X}(u,U)$ of the Hamiltonian vector field of (\ref{eqn:hamLCregintro6}) computed in the limit:
$$
{\cal X}_0= \lim_{\Norm{u} \to 0, \Norm{\text{d}u/\text{d}s}\to\sqrt{\mu/2}}{\cal X}(u,U)\;,
$$
where $s=s(t)$ is the Levi-Civita time reparametrization, has eigenvalues $\pm \sqrt{\gamma/2}$, and therefore is hyperbolic
when $\gamma>0$.

To develop this idea for the full ER3BP we consider the Hamiltonian ${\cal K}(u,\phi,U,\Phi)$ of the problem regularized at $P_2$, as it will be obtained in Section \ref{sec:thm}: $u=(u_1,u_2,u_3,u_4)$ denote the KS variables;
$U=(U_1,U_2,U_3,U_4)$ a set of conjugate momenta. The  additional conjugate variables $\phi,\Phi$ are needed in the regularization of the elliptic problem: by denoting with
$s$ the independent variable (the proper time) of the Hamilton equations of the regularized Hamiltonian, $\phi(s)$ is the true anomaly of $P_2$ for the value $s$ of the proper time (see Section \ref{sec:thm} for all the details). We obtain for ${\cal K}(u,\phi,U,\Phi)$ a representation similar
to \eqref{eqn:hamLCregintro6}:
\begin{equation}
\mathcal{K}(u,\phi,U,\Phi)=\frac{1}{8}\Vert U-b(u)\Vert^2
-\Vert {u}\Vert^2\left (-\Phi + {3-4 \mu +\mu^2\over 2 (1+\varepsilon \cos\phi)}
  \right )-{\mu \over 1+\varepsilon \cos\phi} +\mathcal{R}_6(u,\phi)\;,
\label{eqn:khamregintro6}
\end{equation}
where $b(u)=\mathcal{O}(\Norm{u}^3)\in\mathbb{R}^4$; $\mathcal{R}_6(u,\phi)$ is regular at $u=0$, and its Taylor
expansion in the vector variable $u$ begins with order 6. There is however a fundamental difference
with respect to the circular case, since the coefficient:
\begin{equation}
  \Gamma(\phi,\Phi) =-\Phi +{3 -4 \mu +\mu^2\over 2(1+\varepsilon \cos\phi)}
    \label{Gamma-ell}
\end{equation}
depends on the variables $\Phi,\phi$, and therefore is not constant
along the solutions $(u(s),\phi(s),\allowbreak U(s),\Phi(s))$ of the Hamilton equations
of ${\cal K}$. Moreover, the derivative of $\Gamma_s \coloneqq\Gamma(\phi(s),\Phi(s))$ with respect to the proper time $s$ is proportional to $\varepsilon$, but does not vanish for $\mu\to 0$. Using the
representation (\ref{eqn:khamregintro6}),
we notice that the variation of $\Gamma_s$ satisfies:
$$
{\text{d}\over \text{d}s}\Gamma_s =-\varepsilon \mu {\sin\phi \over (1+\varepsilon \cos\phi)^2}-{\varepsilon \sin\phi \over (1+\varepsilon \cos\phi)^2} {\cal O}(\Norm{u}^6)\;,
$$
so that, during the transit through the Hill's sphere, where
$\Norm{u}$ is smaller than order $\mu^{1\over 6}$ (see Section \ref{sec:thm}), we have:
$$
\norm{\frac{\text{d}\Gamma_s}{\text{d}s}} = {\cal O}(\varepsilon \mu)\;. 
$$
The stability of $\Gamma_s$ up to times of the
order of $1/(\varepsilon\mu)$ provides the opportunity to assess the hyperbolicity
of fast close encounters for the ER3BP, for small values of $\mu$. 
In fact, to establish the hyperbolic character of
the close encounter in the ER3BP from the representation (\ref{eqn:khamregintro6}) we need to establish the stability of the coefficient $\Gamma_s$
for two reasons. First, we notice that the variational matrix ${\cal X}$ of the Hamiltonian vector field of ${\cal K}$ has the representation:
\begin{equation}
		\label{eqn:matX}
{\cal X}(u,U,\phi)= \left(
\begin{array}{cc}
  {\partial^2 {\cal K}\over \partial u \partial U} &  {\partial^2 {\cal K}\over \partial U\partial U} \\
   -{\partial^2 {\cal K}\over \partial u \partial u} &  -{\partial^2 {\cal K}\over \partial U \partial u} 
 \end{array}
\right)
= {\cal X}_0 +{\cal O}(\Norm{u}^2) + {\cal O}(\Norm{U}){\cal O}(\Norm{u})\;,
\end{equation}
with
\begin{equation}
	\label{eqn:matX0}
{\cal X}_0 = \left(
\begin{array}{cccccccc}
 0 & 0 & 0 & 0 & 1/4 & 0 & 0 & 0 \\
 0 & 0 & 0 & 0 & 0 & 1/4 & 0 & 0 \\
 0 & 0 & 0 & 0 & 0 & 0 & 1/4 & 0 \\
 0 & 0 & 0 & 0 & 0 & 0 & 0 & 1/4 \\
 2 \Gamma_s & 0 & 0 & 0 & 0 & 0 & 0 & 0 \\
 0 & 2 \Gamma_s & 0 & 0 & 0 & 0 & 0 & 0 \\
 0 & 0 & 2 \Gamma_s & 0 & 0 & 0 & 0 & 0 \\
 0 & 0 & 0 & 2 \Gamma_s & 0 & 0 & 0 & 0 \\
\end{array}
\right)\;.
\end{equation}
Therefore, on one hand we need that motions entering the Hill's sphere with $\Gamma_0 >0$, maintain a value of $\Gamma_s > 0$  during the transit
through the Hill's sphere,  so that the matrix ${\cal X}_0$ is hyperbolic.
On the other hand, we need to provide an upper bound to the elements
of the matrix ${\cal X}-{\cal X}_0$ during the transit in the Hill's sphere,
to ensure the hyperbolicity of ${\cal X}$ for suitably small
values of $\mu$. In particular, a sufficient upper bound
to $\norm{U_j(s)}$ is obtained if, for example, during the transit
we have $\Gamma_s \leq (3/2) \Gamma_0$ (see Section \ref{sec:fastcloseencounters} for all the
details). In Section  \ref{sec:fastcloseencounters} we prove that, if $\mu$ satisfies:
\begin{equation}
  \mu < c (1-\varepsilon)^6 \Gamma_0^{3\over 2}\;,
\end{equation}
where $c>0$ is a suitable constant independent of $\mu,\varepsilon,\Gamma_0$, 
then during the transit in the Hill's sphere we have 
$\Gamma_s \in [\Gamma_0/2,(3/2) \Gamma_0]$, so that the matrix ${\cal X}_0$ is hyperbolic; 
an additional smallness condition on $\mu$  grants that also the matrix ${\cal X}$ is hyperbolic. Therefore, we justify the effectiveness of the RFLIs to detect orbits with multiple fast close encounters also for the ER3BP when the parameter $\mu$ is small. We also provide numerical demonstrations of the detection of close encounters with regularized Lyapunov
indicators.\\ 

The paper is structured as follows. In Section \ref{sec:thm} we present a step-by-step construction of the regularization with the final rigorous statement on the projection of the solutions and related proof; Section \ref{sec:fastcloseencounters} is dedicated to the discussion of fast close encounters
  of the ER3BP. Section \ref{sec:numericexamples} is dedicated to numerical demonstrations:  
  Subsection \ref{sec:numeric}, after a short description of the considered scenarios, deals with numerical explorations in a neighborhood of $P_2$ and outlines quantitatively the gain as regards the computational effort;
  Subsection \ref{sec:RFLI} reports examples of detection of fast close encounters with RFLIs in the ER3BP. The details about the transformations which are needed to implement numerically the KS regularization are given in the appendix.

\section{KS regularization of the ER3BP in the synodic reference frame}
\label{sec:thm}

As described in Section \ref{sec:intro}, consider the Hamiltonian \eqref{eqn:ER3BP3Dhamfintro} of the ER3BP, which is conveniently expressed in a synodic rotating-pulsating Cartesian frame \cite{SZ:thorb} where the bodies $P_1,P_2$ have coordinates $(-\mu,0,0)$, $(1-\mu,0,0)$ respectively.\\
\indent Let us now introduce a local regularization on the secondary body\footnote{The local regularization at the primary body $P_1$ could be introduced following the same scheme. We here focus on the regularization at $P_2$ which is particularly relevant for applications to the motion of asteroids, comets and space-flight dynamics.} $P_2$. Following \cite{K64,KS65} we introduce the KS space map as a projection from a space of redundant variables $u_1,u_2,u_3,u_4$ to a space of Cartesian variables $q_1,q_2,q_3$:
\begin{align}
\label{eqn:KSproj}
\begin{split}
\pi\colon\mathbb{R}^4&\longrightarrow\mathbb{R}^3\\
u=(u_1,u_2,u_3,u_4)&\longmapsto\pi(u)=(q_1,q_2,q_3)=q\;,
\end{split}
\end{align}
where
\begin{equation}
\label{eqn:KSprojrel}
(q_1,q_2,q_3,0)=A(u)u\;
\end{equation}
are related to $(x,y,z)$ by
\begin{equation}
\label{eqn:trans}
(x-1+\mu,y,z)=q\;
\end{equation}
and
\begin{equation}
\label{eqn:KSA}
A(u)=
\begin{pmatrix}
u_1 & -u_2 & -u_3 & u_4\\
u_2 & u_1 & -u_4 & -u_3\\
u_3 & u_4 & u_1 & u_2\\
u_4 & -u_3 & u_2 & -u_1
\end{pmatrix}
\end{equation}
is a matrix that plays a central role in the KS regularization. In particular, $A(u)$ fulfills the two properties: it is a linear homogeneous function of $u_1,u_2,u_3,u_4$ and satisfies:
\begin{equation}
	\label{eqn:Aprop}
	A(u)A^T(u)=A^T(u)A(u)=\Vert u\Vert^2\mathbb{I}\;,
\end{equation}
where $\mathbb{I}$ is the $4$-by-$4$ identity matrix; hence $\Vert u\Vert^2=d_2$.\\

In this article we exploit directly such transformation in the elliptic framework by adapting the Hamiltonian derivation of the KS regularization developed in \cite{GuzzoCardin:intCR3BP} for the  CR3BP. We prove that a KS regularization with respect to the secondary body $P_2$ of the ER3BP is represented by the Hamiltonian:
\begin{multline}
\label{eqn:hamregintro}
\mathcal{K}(u,\phi,U,\Phi)=\frac{1}{8}\Vert U-b(u)\Vert^2-\frac{1}{1+\varepsilon\cos\phi}\bigg[(1-\mu)\Vert u\Vert^2\bigg(\frac{1}{\Vert\pi(u)+(1,0,0)\Vert}+\pi_1(u)\bigg)\\
+\mu+\frac 12\Vert u\Vert^2\left(\pi_1^2(u)+\pi_2^2(u)-\pi_3^2(u)\varepsilon\cos \phi\right)+\frac{(1-\mu)^2}{2}\Vert u\Vert^2\bigg]+\Phi\Vert u\Vert^2\;,
\end{multline}
where $u$, $U=(U_1,U_2,U_3,U_4)$ are the KS variables and their conjugate momenta and $\Phi$ is an action conjugate to $\phi$ introduced to make autonomous the problem. The vector $b(u)$ is defined by:
\begin{equation}
\label{eqn:b}
b(u)=2A^T(u)\Lambda A(u)u\;,\quad\Lambda=
\begin{pmatrix}
0 & -1 & 0 & 0\\
1 & 0 & 0 & 0\\
0 & 0 & 0 & 0\\
0 & 0 & 0 & 0
\end{pmatrix} \;.
\end{equation}
Specifically we show that, by adopting a fictitious time $s$ as new independent variable, the solutions $(u(s),\phi(s),U(s),\Phi(s))$ of Hamilton equations related to $\mathcal{K}(u,\phi,U,\Phi)$:
\begin{equation*}
\frac{\text{d}u}{\text{d}s}=\frac{\partial\mathcal{K}}{\partial U}\;,\quad\quad
\frac{\text{d}\phi}{\text{d}s}=\frac{\partial\mathcal{K}}{\partial\Phi}\;,\quad\quad
\frac{\text{d}U}{\text{d}s}=-\frac{\partial\mathcal{K}}{\partial u}\;,\quad\quad
\frac{\text{d}\Phi}{\text{d}s}=-\frac{\partial\mathcal{K}}{\partial\phi}\;,
\end{equation*}
that, for $s=0$, satisfy:
\begin{enumerate}[label=(\roman*)]
	\item $u(0)\neq0$,
	\item $l(u(0),U(0))=0$, with $l(u,U)=u_4U_1-u_3U_2+u_2U_3-u_1U_4$,
	\item $\mathcal{K}(u(0),\phi(0),U(0),\Phi(0))=0$, $\phi(0)=f_0$
\end{enumerate}
project (via $\pi$, the translation $x\mapsto x+1-\mu$ and $\text{d}f/\text{d}s=\Vert u\Vert^2=d_2$), locally to $s=0$, onto solutions $(x(f),y(f),z(f),p_1(f),p_2(f),p_3(f))$ of Hamilton equations
from \eqref{eqn:ER3BP3Dhamfintro}. \\

\subsection{Lagrangian formulation in the rotating-pulsating frame}
\label{subsec:LagrRP}
Let $L(x,y,z,x^{\prime},y^{\prime},z^{\prime},f)$ be the Lagrangian of the spatial ER3BP in the rotating-pulsating frame with explicit dependence on the true anomaly\footnote{The change of time from $t$ to $f$ is a classic simplification \cite{scheibner1866satz,SZ:thorb}, where $f$ is thought taking values in the covering $\mathbb{R}$ of $\mathbb{S}^1\cong\mathbb{T}$.} $f$ (the superscript denotes the derivative with respect to that):
\begin{multline}
\label{eqn:Lagr}
L(x,y,z,x^{\prime},y^{\prime},z^{\prime},f)=\frac12\left((x^{\prime})^2+(y^{\prime})^2+(z^{\prime})^2\right)+xy^{\prime}-x^{\prime}y\\
+\frac{1}{1+\varepsilon\cos f}\left(\frac{1-\mu}{d_1}+\frac{\mu}{d_2}+\frac{1}{2}(x^2+y^2-z^2\varepsilon\cos f)\right)\;,
\end{multline}
where, explicitly, $d_1=\sqrt{(x+\mu)^2+y^2+z^2}$ and $d_2=\sqrt{(x-1+\mu)^2+y^2+z^2}$.\\
\indent The origin of the coordinate axes is now moved to one of the two singular positions and thus, as mentioned above in Section \ref{sec:thm}, we choose $P_2(x_2,y_2,z_2)$:
\begin{equation}
\label{eqn:trans2}
(x-x_2,y,z)=q\;.
\end{equation}
Then \eqref{eqn:Lagr} becomes:
\begin{multline}
\label{eqn:transLag}
\widetilde{L}(q,q^{\prime},f)=\frac12\Vert q^{\prime}\Vert^2+q^{\prime}\times(0,0,1)\cdot q\\
+\frac{1}{1+\varepsilon\cos f}\bigg[(1-\mu)\bigg(\frac{1}{\Vert q+(1,0,0)\Vert}
+q_1\bigg)\\
+\frac{\mu}{\Vert q\Vert}+\frac 12(q_1^2+q_2^2-q_3^2\varepsilon\cos f)\bigg]\;,
\end{multline}
where the addenda $q^{\prime}\times (0,0,1)\cdot (x_2,0,0)$ and $(1-\mu)^2/(2(1+\varepsilon\cos f))$ have been dropped because they do not contribute to the Lagrange equations.

\subsection{The space of redundant variables}
\label{subsec:redund}
By following \cite{GuzzoCardin:intCR3BP} and applying their argument to
the elliptic problem, we apply  the projection map defined by \eqref{eqn:KSprojrel} to 
the previous Lagrangian $\widetilde{L}$,  and we compute the function $\mathscr{L}(u,u^{\prime},f)$ exploiting the relationship:
\begin{equation*}
(q_1^{\prime},q_2^{\prime},q_3^{\prime},0)=2A(u)u^{\prime}-2(0,0,0,l(u,u^{\prime}))\;,
\end{equation*}
in which
\begin{equation}
\label{eqn:bil}
l(u,u^{\prime})=u_4u_1^{\prime}-u_3u_2^{\prime}+u_2u_3^{\prime}-u_1u_4^{\prime}
\end{equation}
is the bilinear form appearing in the usual KS regularization. We obtain:
\begin{multline}
\label{eqn:lagredun}
\mathscr{L}(u,u^{\prime},f)=\widetilde{L}\left(\pi(u),\frac{\partial\pi}{\partial u}(u)u^{\prime},f \right)=2\Vert u\Vert^2\Vert u^{\prime}\Vert^2-2l^2(u,u^{\prime})+b(u)\cdot u^{\prime}\\
+\frac{1}{1+\varepsilon\cos f}\bigg[(1-\mu)\bigg(\frac{1}{\Vert\pi(u)+(1,0,0)\Vert}+\pi_1(u)\bigg)\\
+\frac{\mu}{\Vert u\Vert^2}+\frac 12(\pi_1^2(u)+\pi_2^2(u)-\pi_3^2(u)\varepsilon\cos f)\bigg]\;,
\end{multline}
for $b(u)$ expressed as in \eqref{eqn:b}.\\
\indent The first task consists in proving the specific invariance of Lagrange equations under the transformation at issue. In practice, the solutions of Lagrange equations for $\mathscr{L}(u,u^{\prime},f)$, which we write using the operator notation:
\begin{equation}
\label{eqn:KSLoperu}
[\mathscr{L}]_i(u,u^{\prime},u^{\prime\prime},f)=\frac{\text{d}}{\text{d}f}\frac{\partial\mathscr{L}}{\partial u_i^{\prime}}-\frac{\partial\mathscr{L}}{\partial u_i}=0\;,\quad\forall i=1,2,3,4\;,
\end{equation}
have to be compared with the solutions of Lagrange equations for $\widetilde{L}(q,q^{\prime},f)$, denoted by:
\begin{equation}
\label{eqn:KSLoperq}
\big<\widetilde{L}\big>_i(q,q^{\prime},q^{\prime\prime},f)=\frac{\text{d}}{\text{d}f}\frac{\partial\widetilde{L}}{\partial q_j^{\prime}}-\frac{\partial\widetilde{L}}{\partial q_j}=0\;,\quad\forall j=1,2,3\;.
\end{equation}
With the following statement it turns out that this requirement is fulfilled as soon as the solution $u(f)\neq0$ for all $f\in\mathbb{T}$.
\begin{proposition}
	\label{prop:lagruq}
	If $u(f)$ is a solution of Lagrange equations associated to $\mathscr{L}(u,u^{\prime},f)$ with initial condition $u(0)\neq0$, then $q(f)=\pi(u(f))$ is a solution of Lagrange equations associated to $\widetilde{L}(q,q^{\prime},f)$ as soon as $u(f)\neq0$.
\end{proposition}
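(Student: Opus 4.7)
The plan is to exploit the fact that $\mathscr{L}$ is obtained from $\widetilde{L}$ by composing with the KS map, namely $\mathscr{L}(u,u^\prime,f) = \widetilde{L}(\pi(u), D\pi(u)\,u^\prime, f)$, where $D\pi(u)$ is the $3\times 4$ Jacobian. A direct chain-rule calculation should relate $[\mathscr{L}]_i$ to a linear combination of the $\langle \widetilde{L}\rangle_j$, and the hypothesis $u(f)\neq 0$ will then allow us to invert this relation and conclude that $\langle \widetilde{L}\rangle_j = 0$ for $j=1,2,3$.

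First, I would verify the Jacobian identity $D\pi(u) = 2A^{(3)}(u)$, where $A^{(3)}(u)$ denotes the first three rows of $A(u)$. This follows because, for $j=1,2,3$, the scalar $(A(u)u)_j$ is a quadratic form in $u$ with a symmetric coefficient matrix (read off directly from \eqref{eqn:KSA}), whose gradient equals twice the corresponding row of $A(u)$. Simultaneously, $(A(u)u)_4 \equiv 0$, which is why the KS map lands in a three-dimensional subspace.

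Next I would carry out the chain rule. Viewing $v=u^\prime$ as an independent velocity, we have
\begin{align*}
\frac{\partial \mathscr{L}}{\partial u_i^\prime} &= \sum_{j=1}^3 \frac{\partial \widetilde{L}}{\partial q_j^\prime}\,(D\pi(u))_{ji}\;,\\
\frac{\partial \mathscr{L}}{\partial u_i} &= \sum_{j=1}^3 \frac{\partial \widetilde{L}}{\partial q_j}\,(D\pi(u))_{ji} + \sum_{j=1}^3 \frac{\partial \widetilde{L}}{\partial q_j^\prime}\,\frac{\partial (D\pi(u)\,u^\prime)_j}{\partial u_i}\;.
\end{align*}
Differentiating the first expression with respect to $f$ along a curve $u(f)$ produces a contribution proportional to $\sum_k \frac{\partial^2 \pi_j}{\partial u_i \partial u_k}\,u_k^\prime$, while the last sum in $\partial \mathscr{L}/\partial u_i$ contains $\sum_k \frac{\partial^2 \pi_j}{\partial u_k \partial u_i}\,u_k^\prime$. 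These cancel by Schwarz's theorem, leaving the identity
\[
[\mathscr{L}]_i(u,u^\prime,u^{\prime\prime},f) = \sum_{j=1}^3 \langle \widetilde{L}\rangle_j\bigl(\pi(u),\tfrac{d}{df}\pi(u),\tfrac{d^2}{df^2}\pi(u),f\bigr)\,(D\pi(u))_{ji}\;,
\]
or in matrix form $[\mathscr{L}] = (D\pi(u))^T\langle \widetilde{L}\rangle$.

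Finally, the orthogonality relation $A(u)A^T(u) = \Vert u\Vert^2\,\mathbb{I}$ shows that the rows of $A(u)$ are mutually orthogonal of norm $\Vert u\Vert$, hence linearly independent whenever $u\neq 0$. Consequently $(D\pi(u))^T = 2(A^{(3)}(u))^T$ is a $4\times 3$ matrix of rank $3$, with trivial kernel. Along the given solution $u(f)\neq 0$, the vanishing $[\mathscr{L}]=0$ therefore forces $\langle \widetilde{L}\rangle_j = 0$ at $q=\pi(u)$, $q^\prime = \tfrac{d}{df}\pi(u)$, $q^{\prime\prime} = \tfrac{d^2}{df^2}\pi(u)$, which is precisely the statement of the proposition. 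The only delicate point is the cancellation of mixed partials in the chain rule; once this is in place, the conclusion is pure linear algebra involving the KS matrix $A(u)$.
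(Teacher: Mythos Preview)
Your proposal is correct and follows essentially the same route as the paper: both derive the identity $[\mathscr{L}]=(\partial\pi/\partial u)^T\langle\widetilde{L}\rangle$ via the chain rule (with the same cancellation of the mixed second partials) and then conclude by showing that $(\partial\pi/\partial u)^T$ has trivial kernel for $u\neq 0$. The only cosmetic difference is in this last step: you invoke the orthogonality relation $A(u)A^T(u)=\Vert u\Vert^2\mathbb{I}$ to deduce that the first three rows of $A(u)$ are independent, whereas the paper writes out the $4\times 3$ linear system explicitly and argues directly that it forces $\alpha=\beta=\eta=0$.
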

\begin{proof}
	For any smooth curve $u(f)$, reminding that:
	\begin{equation*}
	\mathscr{L}(u,u^{\prime},f)=\widetilde{L}\left(\pi(u),\frac{\partial\pi}{\partial u}(u)u^{\prime},f\right)\;,
	\end{equation*}
	as well as
	\begin{equation*}
	\frac{\partial q^{\prime}}{\partial u^{\prime}}=\frac{\partial \pi}{\partial u}\;,
	\end{equation*}
	one gets from the chain rule:
	\begin{equation*}
	\frac{\partial\mathscr{L}}{\partial u_i^{\prime}}=
	\sum_{j=1}^3\frac{\partial\widetilde{L}}{\partial q_j^{\prime}}\frac{\partial q'_j}{\partial u'_i}=\sum_{j=1}^3\frac{\partial\widetilde{L}}{\partial q_j^{\prime}}\frac{\partial\pi_j}{\partial u_i}\;
	\end{equation*}
	and
\begin{align*}
\frac{\text{d}}{\text{d}f}\frac{\partial\mathscr{L}}{\partial u_i^{\prime}}&=\sum_{j=1}^3\frac{\text{d}}{\text{d}f}\frac{\partial\widetilde{L}}{\partial q_j^{\prime}}\frac{\partial\pi_j}{\partial u_i}+\sum_{j=1}^3\frac{\partial\widetilde{L}}{\partial q_j^{\prime}}\frac{\text{d}}{\text{d}f}\frac{\partial\pi_j}{\partial u_i}\\
&=\sum_{j=1}^3\frac{\text{d}}{\text{d}f}\frac{\partial\widetilde{L}}{\partial q_j^{\prime}}\frac{\partial\pi_j}{\partial u_i}+\sum_{j=1}^3\frac{\partial\widetilde{L}}{\partial q_j^{\prime}}\sum_{k=1}^{4}\frac{\partial^2\pi_j}{\partial u_i\partial u_k}u_k^{\prime}\;,
\end{align*}
\begin{equation*}
\frac{\partial\mathscr{L}}{\partial u_i}=\sum_{j=1}^3\frac{\partial\widetilde{L}}{\partial q_j}\frac{\partial\pi_j}{\partial u_i}+\sum_{j=1}^3\frac{\partial\widetilde{L}}{\partial q_j^{\prime}}\sum_{k=1}^{4}\frac{\partial^2\pi_j}{\partial u_i\partial u_k}u_k^{\prime}\;,
\end{equation*}
for $i=1,2,3,4$. As a consequence we have:
\begin{equation*}
[\mathscr{L}](u(f),u^{\prime}(f),u^{\prime\prime}(f),f)=\left ({\partial \pi\over\partial u}(u(f))\right )^T\big<\widetilde{L}\big>\left(\pi(u(f)),\frac{\text{d}}{\text{d}f}\pi(u(f)),\frac{\text{d}^2}{\text{d}f^2}\pi(u(f)),f\right)\;
\end{equation*}
where $[\mathscr{L}]\in\mathbb{R}^4$, $\big<\widetilde{L}\big>\in\mathbb{R}^3$ are the vectors of components respectively $[\mathscr{L}]_i$, $\big<\widetilde{L}\big>_j$.\\
Since  by assumption $[\mathscr{L}]_i(u(f),u^{\prime}(f),u^{\prime\prime}(f),f)=0$, the vector
\begin{equation*}
\big<\widetilde{L}\big>\left(\pi(u(f)),\frac{\text{d}}{\text{d}f}\pi(u(f)),\frac{\text{d}^2}{\text{d}f^2}\pi(u(f)),f\right)
\end{equation*}
is for any $f$ in the kernel of the matrix $\left ({\partial \pi\over\partial u}(u(f))\right )^T$. We claim that the kernel of $\left ({\partial \pi\over \partial u}(u(f))\right )^T$ contains only $(0,0,0)$ if $u\ne 0$. In fact, an element $(\alpha,\beta,\eta)$ is in the kernel of $\left ({\partial \pi\over \partial u}(u(f))\right )^T$ if and only if its components satisfy the system:
\begin{equation*}
\begin{cases}
u_1\alpha+u_2\beta+u_3\eta=0\\
-u_2\alpha+u_1\beta+u_4\eta=0\\
-u_3\alpha-u_4\beta+u_1\eta=0\\
u_4\alpha-u_3\beta+u_2\eta=0
\end{cases},
\end{equation*}
which admits the unique solution $\alpha=\beta=\eta=0$ as long as at least one of the components of $u$ is different from zero.\\
This implies $\big<\widetilde{L}\big>\left(\pi(u(f)),\frac{\text{d}}{\text{d}f}\pi(u(f)),\frac{\text{d}^2}{\text{d}f^2}\pi(u(f)),f\right)=0$ and $q(f)=\pi(u(f))$ is a solution of the Lagrange equations of $\widetilde{L}$.
\end{proof}

\subsubsection*{The modified Lagrangian}
The second matter to tackle regards the Legendre transform (necessary to deduce in Subsection \ref{subsec:regul} the corresponding transformed Hamiltonian and then proceed with the development):
\begin{equation}
\label{eqn:KSlegendre}
\frac{\partial\mathscr{L}}{\partial u^{\prime}}=\left(\frac{\partial\mathscr{L}}{\partial u_1^{\prime}},\frac{\partial\mathscr{L}}{\partial u_2^{\prime}},\frac{\partial\mathscr{L}}{\partial u_3^{\prime}},\frac{\partial\mathscr{L}}{\partial u_4^{\prime}}\right)=4\Vert u\Vert^2u^{\prime}-4(\Omega u\cdot u^{\prime})\Omega u+b(u)\;,
\end{equation}
where 
\begin{equation}
\label{eqn:KSpermut}
\Omega=
\begin{pmatrix}
0 & 0 & 0 & 1\\
0 & 0 & -1 & 0\\
0 & 1 & 0 & 0\\
-1 & 0 & 0 & 0
\end{pmatrix}
\end{equation}
is an ad hoc permutation matrix coming from the bilinear form term ($l(u,u^{\prime})=\Omega u\cdot u^{\prime}$), which is not invertible with respect to the generalized velocities, because the Hessian matrix
\begin{equation}
\label{eqn:KShess}
\mathscr{H}_{u^{\prime}}=\left(\frac{\partial^2\mathscr{L}}{\partial u_i^{\prime}\partial u_j^{\prime}}\right)\;,\quad i,j\in\{1,2,3,4\}\;,
\end{equation}
is identically singular, indeed $\det\mathscr{H}_{u^{\prime}}=0$. To overcome the degeneracy we proceed as in \cite{GuzzoCardin:intCR3BP}:
it is profitable to change the Lagrangian just by adding two times the square of the bilinear form (so that $-2l^2$ vanishes): such artifice precisely allows to restore the invertibility, thereby:
\begin{multline}
\label{eqn:KSmodifL}
\mathcal{L}(u,u^{\prime},f)=\mathscr{L}(u,u^{\prime},f)+2l^2(u,u^{\prime})=2\Vert u\Vert^2\Vert u^{\prime}\Vert^2+b(u)\cdot u^{\prime}\\
+\frac{1}{1+\varepsilon\cos f}\bigg[(1-\mu)\bigg(\frac{1}{\Vert\pi(u)+(1,0,0)\Vert}+\pi_1(u)\bigg)\\
+\frac{\mu}{\Vert u\Vert^2}+\frac 12(\pi_1^2(u)+\pi_2^2(u)-\pi_3^2(u)\varepsilon\cos f)\bigg]
\end{multline}
is the modified Lagrangian and in fact, introducing the KS momenta $U=(U_1,U_2,U_3,U_4)$ conjugate to $u=(u_1,u_2,u_3,u_4)$, the relationship
\begin{equation}
\label{eqn:KSLegendre}
U=\frac{\partial\mathcal{L}}{\partial u^{\prime}}(u,u^{\prime})=4\Vert u\Vert^2u^{\prime}+b(u)
\end{equation}
is non-degenerate (thus invertible) in $u^{\prime}$ for $u\neq0$.

\subsubsection*{Rotational invariance of the modified Lagrangian}
The sum of the quadratic expression $2l^2(u,u^{\prime})$ of course alters $\mathscr{L}(u,u^{\prime},f)$ and again one has to make sure that such action is legitimized under appropriate conditions (until now $u(f)\neq 0$ always). Let then the investigation begin by realizing the well known
remarkable symmetry property of the KS transformation. 
\begin{proposition}
	\label{prop:lagrrotinv}
	The modified Lagrangian $\mathcal{L}(u,u^{\prime},f)$ is invariant under the one-parameter family of transformations involving the redundant coordinates:
	\begin{align}
	\label{eqn:KSinvrot}
	\begin{split}
	S_{\theta}\colon\mathbb{R}^4&\longrightarrow\mathbb{R}^4\\
	u&\longmapsto S_{\theta}u
	\end{split}\;,
	\end{align}
	where $S_{\theta}\in SO(4)$ is the four-dimensional rotation matrix
	\begin{equation}
	\label{eqn:KSrotmat}
	S_{\theta}=
	\begin{pmatrix}
	\cos\theta & 0 & 0& -\sin\theta\\
	0 & \cos\theta & \sin\theta & 0\\
	0 & -\sin\theta & \cos\theta & 0\\
	\sin\theta & 0 & 0 & \cos\theta
	\end{pmatrix}\;,
	\end{equation}
	whose orbits define the fibers of the projection $\pi$, i.e., $\pi(S_{\theta}u)=\pi(u)$ for all $\theta\in\mathbb{T}$. More precisely:
	\begin{equation}
	\label{eqn:KSinvlag}
	\mathcal{L}(S_{\theta}u,S_{\theta}u^{\prime},f)=\mathcal{L}(u,u^{\prime},f)\;.
	\end{equation}
\end{proposition}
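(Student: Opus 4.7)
The plan is to split the statement into two subclaims: (a) the fiber identity $\pi(S_\theta u)=\pi(u)$, and (b) the full Lagrangian invariance, which reduces to (a) once the pieces of $\mathcal{L}$ are unpacked. The guiding observation is that $S_\theta$ acts as a pair of simultaneous planar rotations of angle $\theta$ in the $(u_1,u_4)$ and $(u_2,u_3)$ two-planes (with opposite orientations), and every coordinate expression entering $\mathcal{L}$ respects exactly this pairing.

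For (a) I would compute $\pi(u)$ explicitly from \eqref{eqn:KSprojrel}--\eqref{eqn:KSA}, obtaining
\begin{equation*}
\pi(u)=\bigl(u_1^2-u_2^2-u_3^2+u_4^2,\;2(u_1u_2-u_3u_4),\;2(u_1u_3+u_2u_4)\bigr),
\end{equation*}
and then substitute the components of $S_\theta u$ and simplify using $\cos^2\theta+\sin^2\theta=1$. The first component regroups at once as $(u_1^2+u_4^2)-(u_2^2+u_3^2)$, both sums being preserved by the paired rotations; the remaining two collapse back to their original form after a short bilinear expansion. Equivalently, differentiating $\pi(S_\theta u)$ at $\theta=0$ one checks that the infinitesimal generator $(-u_4,u_3,-u_2,u_1)$ lies in the kernel of $\partial\pi/\partial u$ at every $u$, and the one-parameter group property $S_{\theta_1+\theta_2}=S_{\theta_1}S_{\theta_2}$ then propagates this infinitesimal invariance to arbitrary $\theta$.

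For (b) I would decompose $\mathcal{L}$ from \eqref{eqn:KSmodifL} into three pieces and handle each separately. The kinetic-like term $2\Vert u\Vert^2\Vert u^\prime\Vert^2$ is invariant because $S_\theta\in SO(4)$ preserves the Euclidean norm of both $u$ and $u^\prime$. The bracketed potential-and-centrifugal expression depends on $u$ only through $\pi(u)$ and $\Vert u\Vert^2$, so invariance is immediate from (a). The only nontrivial ingredient is the gyroscopic term $b(u)\cdot u^\prime$: using \eqref{eqn:b} together with $A(u)u=(\pi(u),0)^T$ and the identity $2A(u)u^\prime=(q_1^\prime,q_2^\prime,q_3^\prime,2l(u,u^\prime))$, one finds
\begin{equation*}
b(u)\cdot u^\prime=2\bigl(A^T(u)\Lambda A(u)u\bigr)\cdot u^\prime=\bigl(\Lambda A(u)u\bigr)\cdot\bigl(2A(u)u^\prime\bigr)=\pi_1(u)\,q_2^\prime-\pi_2(u)\,q_1^\prime,
\end{equation*}
the fourth entry $2l(u,u^\prime)$ being annihilated by the vanishing fourth entry of $\Lambda A(u)u=(-\pi_2(u),\pi_1(u),0,0)^T$. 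Since $\pi(u)$ is $S_\theta$-invariant by (a), and $q^\prime$ is the total derivative of $\pi$ along the curve, linearity of $S_\theta$ and its constancy in $f$ force $q^\prime$ to be invariant as well under the substitution $(u,u^\prime)\mapsto(S_\theta u,S_\theta u^\prime)$, so the gyroscopic term is invariant.

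The main obstacle I anticipate is precisely this algebraic reduction of $b(u)\cdot u^\prime$ to the compact form $\pi_1 q_2^\prime-\pi_2 q_1^\prime$: it rests on carefully tracking how the block structure of $\Lambda$ interacts with the identity $(q^\prime,0)=2A(u)u^\prime-2(0,0,0,l(u,u^\prime))$, so that the unwanted $l(u,u^\prime)$ contribution drops out. Once this identity is secured, the rest of the argument is a routine combination of the orthogonality of $S_\theta$ with the fiber property (a), and the invariance \eqref{eqn:KSinvlag} follows.
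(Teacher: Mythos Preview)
Your argument is correct. The paper itself does not supply a proof of this proposition: it simply notes that the invariance is a structural property of the KS map, identical to the circular case, and refers the reader to \cite{GuzzoCardin:intCR3BP}. Your self-contained decomposition---orthogonality of $S_\theta$ for the kinetic term, the fiber identity $\pi(S_\theta u)=\pi(u)$ for the potential block, and the reduction $b(u)\cdot u'=\pi_1(u)\,q_2'-\pi_2(u)\,q_1'$ for the gyroscopic term---is precisely the kind of direct verification the citation would unpack, and your elimination of the $l(u,u')$ contribution via the vanishing last two rows of $\Lambda$ is clean. One small sharpening: the invariance of $q'=\tfrac{\partial\pi}{\partial u}(u)u'$ under $(u,u')\mapsto(S_\theta u,S_\theta u')$ is really a pointwise tangent-bundle identity, obtained by differentiating $\pi(S_\theta u)=\pi(u)$ in $u$ to get $\tfrac{\partial\pi}{\partial u}(S_\theta u)\,S_\theta=\tfrac{\partial\pi}{\partial u}(u)$; your curve-based phrasing is equivalent but this formulation matches the Lagrangian setting more directly.
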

\begin{proof}
	This is a property of the KS transformation, which is the same for the CR3BP and for the ER3BP. Thus, for the proof we refer to \cite{GuzzoCardin:intCR3BP}, Section \ref{sec:thm}.
\end{proof}
\noindent This fact implies that there exists, by Noether's theorem, a conserved quantity:
\begin{equation*}
J(u,u^{\prime})=\frac{\partial\mathscr{L}}{\partial u^{\prime}}\cdot\frac{\text{d}}{\text{d}\theta}S_{\theta}u\bigg|_{\theta=0}=-4\Vert u\Vert^2l(u,u^{\prime})-b(u)\cdot\Omega u=-4\Vert u\Vert^2l(u,u^{\prime})
\end{equation*}
which is an autonomous first integral for the Lagrangian $\mathcal{L}$. For convenience the final constant of motion is given by:
\begin{equation}
\label{KS:firstint}
\mathcal{J}(u,u^{\prime})=\Vert u\Vert^2l(u,u^{\prime}).
\end{equation}
If the bilinear form is cleverly zeroed out at $f=0$ (by proper initial conditions), it will keep taking zero value for further $f$ (since we only consider time intervals such that $\Vert u\Vert\neq0$), so the extra factor $2l^2$ would become a vanishing contribution to the Lagrange equations.\\
According to such idea, the bilinear form assumes the meaning of constraint to be respected along the motion and the final claim, whose proof in the proposition below resolves completely the issue, is that Lagrange equations associated to $\mathcal{L}$ have the same solutions of the Lagrange equations associated to $\mathscr{L}$.
\begin{proposition}
	\label{prop:modlagreqq}
	If $u(f)$ is a solution of the Lagrange equations of $\mathcal{L}(u,u^{\prime},f)$ with initial data $u(0)$, $u^{\prime}(0)$ satisfying $u(0)\neq0$ and $l(u(0),u^{\prime}(0))=0$, then it is also a solution of the Lagrange equations of $\mathscr{L}(u,u^{\prime},f)$ as long as $u(f)\neq0$.
\end{proposition}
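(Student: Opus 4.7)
The plan is to leverage the identity $\mathcal{L} = \mathscr{L} + 2l^2$ together with the Noether conservation law already established in the discussion following Proposition~\ref{prop:lagrrotinv}. Specifically, rotational invariance of $\mathcal{L}$ under the family $S_\theta$ produced the first integral $\mathcal{J}(u,u')=\Vert u\Vert^2\, l(u,u')$, which is conserved along every solution of the Lagrange equations of $\mathcal{L}$. Under the hypotheses $u(0)\neq 0$ and $l(u(0),u'(0))=0$, the initial value of $\mathcal{J}$ is zero, hence $\mathcal{J}(u(f),u'(f))\equiv 0$ for all $f$ in the interval of existence. As long as $\Vert u(f)\Vert\neq 0$, this forces $l(u(f),u'(f))\equiv 0$ along the solution.

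The next step is to compare the two Lagrange operators directly. By linearity,
\begin{equation*}
[\mathcal{L}]_i(u,u',u'',f) - [\mathscr{L}]_i(u,u',u'',f) = [2l^2]_i(u,u',u'',f)\;,\quad i=1,2,3,4\;.
\end{equation*}
I will apply the chain rule to $l^2$, writing
\begin{equation*}
[2l^2]_i = 4\,\frac{\text{d}l}{\text{d}f}\,\frac{\partial l}{\partial u'_i} + 4 l\left(\frac{\text{d}}{\text{d}f}\frac{\partial l}{\partial u'_i}-\frac{\partial l}{\partial u_i}\right)\;,
\end{equation*}
which exhibits $[2l^2]_i$ as an explicit linear combination, each of whose terms carries a factor of $l$ or $\text{d}l/\text{d}f$. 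Since $l$ vanishes identically along the considered trajectory, so does its derivative with respect to $f$, and therefore $[2l^2]_i(u(f),u'(f),u''(f),f)=0$ for every $i$.

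Putting the two observations together yields $[\mathscr{L}]_i(u(f),u'(f),u''(f),f)=[\mathcal{L}]_i(u(f),u'(f),u''(f),f)=0$ for $i=1,\dots,4$, which is exactly the assertion that $u(f)$ solves the Lagrange equations of $\mathscr{L}$ on the interval where $\Vert u(f)\Vert\neq 0$. The only delicate point is justifying that the Noether first integral $\mathcal{J}$ really is conserved for the modified Lagrangian $\mathcal{L}$ (and not merely for $\mathscr{L}$), but this is precisely the content of the invariance statement $\mathcal{L}(S_\theta u, S_\theta u',f)=\mathcal{L}(u,u',f)$ of Proposition~\ref{prop:lagrrotinv}, so the argument closes cleanly. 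I anticipate the only mild calculational nuisance to be keeping careful track of the signs in $l$ and in $\partial l/\partial u'$ when displaying the explicit form of $[2l^2]_i$; no deeper obstacle appears.
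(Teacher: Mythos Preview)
Your proof is correct and follows essentially the same route as the paper: both use the Noether integral $\mathcal{J}=\Vert u\Vert^2 l$ to get $l\equiv 0$ along the trajectory, then expand $[2l^2]_i$ via the product rule and observe that every term carries a factor of $l$ or $\text{d}l/\text{d}f$. The only cosmetic difference is that the paper explicitly computes $\text{d}l/\text{d}f=l(u,u'')$ before concluding it vanishes, whereas you simply note that $l\equiv 0$ forces its $f$-derivative to vanish.
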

\begin{proof}
	Consider a solution $u(f)$ of the $\mathcal{L}$-equations with $u(0)\neq0 $ and $l(u(0),\allowbreak u^{\prime}(0))=0$. As long as $u(f)\neq0$, by \eqref{KS:firstint}, $l(u(f),u^{\prime}(f))=0$.	Moreover:
	\begin{equation*}
	\frac{\text{d}}{\text{d}f}l(u(f),u^{\prime}(f))=l(u^{\prime}(f),u^{\prime}(f))+l(u(f),u^{\prime\prime}(f))=l(u(f),u^{\prime\prime}(f))\;.
	\end{equation*}
	Now $u(f)$ solves the Lagrange equations for $\mathscr{L}$ too, in fact, referring to the previous notation \eqref{eqn:KSLoperu}, for any $i=1,2,3,4$:
	\begin{align*}
	[\mathscr{L}]_i&=[\mathcal{L}-2l^2]_i
	=[\mathcal{L}]_i-2\left(\frac{\text{d}}{\text{d}f}\frac{\partial}{\partial u_i^{\prime}}l^2(u,u^{\prime})-\frac{\partial}{\partial u_i}l^2(u,u^{\prime})\right)\\
	&=[\mathcal{L}]_i-4\left[\frac{\text{d}}{\text{d}f}\left(l(u,u^{\prime})\frac{\partial}{\partial u_i^{\prime}}l(u,u^{\prime})\right)-l(u,u^{\prime})\frac{\partial}{\partial u_i}l(u,u^{\prime})\right]
	\end{align*}
	and when evaluated along the curve $u(f)$:
	\begin{multline*}
	[\mathscr{L}]_i(u(f),u^{\prime}(f),u^{\prime\prime}(f),f)=[\mathcal{L}]_i(u(f),u^{\prime}(f),u^{\prime\prime}(f),f)\\
	-4\bigg(l(u(f),u^{\prime\prime}(f))\frac{\partial}{\partial u^{\prime}_i}l(u(f),u^{\prime}(f))+l(u(f),u^{\prime}(f))\frac{\text{d}}{\text{d}f}\frac{\partial}{\partial u_i^{\prime}}l(u(f),u^{\prime}(f))\\
	-l(u(f),u^{\prime}(f))\frac{\partial}{\partial u_i}l(u(f),u^{\prime}(f))\bigg)=0\;,
	\end{multline*}
	owing to $l(u(f),u^{\prime\prime}(f))=l(u(f),u^{\prime}(f))=0$.
\end{proof}

\subsection{The regularized Hamiltonian}
\label{subsec:regul}
The corresponding singular Hamiltonian enters now by performing the Legendre transform:
\begin{equation}
\label{eqn:LegHam}
\mathscr{K}(u,U,f)=U\cdot g(u,U)-\mathcal{L}(u,g(u,U),f)\;,
\end{equation}
where
\begin{equation*}
u^{\prime}=g(u,U)=\frac{U-b(u)}{4\Vert u\Vert^2}
\end{equation*}
is the inverse of $U$ with respect to $u^{\prime}$; more explicitly
we have:
\begin{multline}
\label{eqn:modhamnaut}
\mathscr{K}(u,U,f)=\frac{1}{8\Vert u\Vert^2}\Vert U-b(u)\Vert^2\\
-\frac{1}{1+\varepsilon\cos f}\bigg[(1-\mu)\bigg(\frac{1}{\Vert\pi(u)+(1,0,0)\Vert}+\pi_1(u)\bigg)\\
+\frac{\mu}{\Vert u\Vert^2}+\frac 12(\pi_1(u)^2+\pi_2(u)^2-\pi_3(u)^2\varepsilon\cos f)\bigg]\;
\end{multline}
and the bilinear equality $l(u,u^{\prime})=0$ straightforwardly translates in the Hamiltonian formalism as $l(u,U)=0$ for $u\neq0$, because
\begin{equation*}
l(u,u^{\prime})=l(u,g(u,U))=\frac{1}{4\Vert u\Vert^2}l(u,U)-\frac{1}{4\Vert u\Vert^2}l(u,b(u))\;,
\end{equation*}
but $l(u,b(u))=\Omega u\cdot b(u)=0$ identically, hence
we have $l(u,u^{\prime})=0$ if and only if $l(u,U)=0$.\\
\indent With all this in hands it is useful to work with an autonomous extension of the transformed Hamiltonian $\mathscr{K}$. So we append one more degree of freedom to form the extended phase space $T^*((\mathbb{R}^4\setminus\mathscr{C})\times\mathbb{T})$, where
\begin{equation}
	\label{eqn:collset}
        \mathscr{C}=\{(0,0,0,0)\}\cup \{\left(0,u_2,u_3,0\right)\colon u_2^2+u_3^2=1\}
\end{equation}
is the collision set in KS coordinates, with the extra couple of variables $(\phi,\Phi)\in\mathbb{T}\times\mathbb{R}$ and standard symplectic form $\sum_{i=1}^{4}\text{d}u_i\wedge \text{d}U_i+\text{d}\phi\wedge \text{d}\Phi$, in such a way to build the autonomous transformed Hamiltonian:
\begin{equation}\label{eqn:autK}
\widehat{\mathscr{K}}\,(u,\phi,U,\Phi)=\mathscr{K}(u,U,\phi)+\Phi\;,
\end{equation}
and consider the solutions $u(f),\phi(f),U(f),\Phi(f)$ of the Hamilton equations of (\ref{eqn:autK}) such that, for given initial value $f_0$ of the true anomaly, satisfy:
\begin{equation*}
u(f_0)=u_0\;,\quad\phi(f_0)=f_0\;,\quad U(f_0)=U_0\;,\quad\Phi(f_0)=-\mathscr{K}(u_0,U_0,f_0)\;.
\end{equation*}
At this point we perform a rescaling similar to the one in the Levi-Civita
regularization, and define the regularized Hamiltonian: 
\begin{multline}
\label{eqn:hamreg}
\mathcal{K}(u,\phi,U,\Phi)
=\Vert u\Vert^2\widehat{\mathscr{K}}\,(u,\phi,U,\Phi)
=\frac{1}{8}\Vert U-b(u)\Vert^2\\
-\frac{1}{1+\varepsilon\cos \phi}\bigg[(1-\mu)\Vert u\Vert^2\bigg(\frac{1}{\Vert\pi(u)+(1,0,0)\Vert}+\pi_1(u)\bigg)+\mu\\
+\frac 12\Vert u\Vert^2(\pi_1^2(u)+\pi_2^2(u)-\pi_3^2(u)\varepsilon\cos \phi)+\frac{(1-\mu)^2}{2}\Vert u\Vert^2\bigg]+\Phi\Vert u\Vert^2\;.
\end{multline}
\begin{remark}
	\label{rem:Kreg}
	\phantom{text}
		\begin{itemize}
		\item [-] For $\varepsilon=0$, the action $\Phi$ is a constant of
		motion and the Hamiltonian \eqref{eqn:hamreg} is identical to the KS Hamiltonian of the CR3BP, as represented in \cite{GuzzoCardin:intCR3BP} with $\Phi=-E$.
		\item[-] $\mathcal{K}(u,\phi,U,\Phi)$ is invariant under the same one-parameter family of transformations defined by \eqref{eqn:KSinvrot} and \eqref{eqn:KSrotmat}, hence $\mathcal{J}(u,g(u,U))=l(u,U)$ is a first integral also for the Hamilton equations of $\mathcal{K}(u,\phi,U,\Phi)$. 
		\item[-] Hamiltonian \eqref{eqn:hamreg} is regular at $u=0$. 
	\end{itemize}
\end{remark}

\subsection{Projection of the solutions of the regularized Hamiltonian}
\label{subsec:conj}

Let us prove that the solutions of the Hamilton equations of
the regularized Hamiltonian (\ref{eqn:hamreg}) project
on the Hamilton solutions of the Hamiltonian (\ref{eqn:ER3BP3Dhamfintro})
of the ER3BP. Similarly to the classic LC and KS techniques we need an independent variable redefinition, which for the ER3BP is obtained by introducing the fictitious true anomaly $s$ such that:
\begin{equation}
\label{eqn:fictruean}
s^{\prime}(f)=\frac{1}{\Vert u(f)\Vert^2}\;,\quad\quad s(f_0)=0\;,
\end{equation}
whose inverse is precisely $\partial\mathcal{K}/\partial\Phi$. Thereby we state our result.

\begin{theorem}
	The solutions $(u(s),\phi(s),U(s),\Phi(s))$ of Hamilton equations related to $\mathcal{K}(u,\phi,\allowbreak U,\Phi)$ with initial conditions satisfying:
	\begin{enumerate}[label=(\roman*)]
		\item $u(0)\neq0$,
		\item $l(u(0),U(0))=0$,
		\item $\mathcal{K}(u(0),\phi(0),U(0),\Phi(0))=0$, $\phi(0)=f_0$,
	\end{enumerate}
project, for $s$ in a neighborhood of $s=0$, via the true anomaly reparametrization:
	\begin{equation}
	\label{eqn:fictrueanfofs}
	f(s)=f_0+\int_{0}^s\Vert u(\sigma)\Vert^2\text{d}\sigma\;,
	\end{equation}
	the transformation \eqref{eqn:KSprojrel} and the translation \eqref{eqn:trans}, onto solutions $(x(f),y(f),z(f),p_1(f),\allowbreak p_2(f),p_3(f))$ of the Hamilton equations of Hamiltonian \eqref{eqn:ER3BP3Dhamfintro}. 
\end{theorem}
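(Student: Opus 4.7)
The plan is to reverse the chain of transformations laid out in Subsections~\ref{subsec:LagrRP}--\ref{subsec:regul}: undo the rescaling by $\Vert u\Vert^2$, drop the autonomous extension, apply the inverse Legendre transform to recover a Lagrangian solution, invoke Propositions~\ref{prop:modlagreqq} and~\ref{prop:lagruq} to descend to a solution of the Lagrange equations of $\widetilde{L}$, and finally undo the spatial translation \eqref{eqn:trans} before passing through the Legendre transform associated to the original Lagrangian~\eqref{eqn:Lagr}.

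First I would observe that conditions~(ii) and~(iii) are preserved along the flow: $\mathcal{K}$ is autonomous in $s$, hence $\mathcal{K}\equiv 0$; the Noether-type first integral $\mathcal{J}(u,U)=l(u,U)$ mentioned in Remark~\ref{rem:Kreg} guarantees $l(u(s),U(s))\equiv 0$; and $u(s)\neq 0$ on a neighborhood of $s=0$ by continuity from~(i). On this zero-level set, since $\mathcal{K}=\Vert u\Vert^2\,\widehat{\mathscr{K}}$ and $\Vert u\Vert^2\neq 0$, we have $\widehat{\mathscr{K}}\equiv 0$, and the product rule shows that the Hamiltonian vector field of $\mathcal{K}$ equals $\Vert u\Vert^2$ times that of $\widehat{\mathscr{K}}$. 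The reparametrization $f(s)$ of \eqref{eqn:fictrueanfofs} then converts $(u(s),\phi(s),U(s),\Phi(s))$ into a solution of the Hamilton equations of $\widehat{\mathscr{K}}$ in the independent variable $f$. Since $d\phi/df=\partial\widehat{\mathscr{K}}/\partial\Phi=1$ and $\phi(f_0)=f_0$ by~(iii), we get $\phi(f)\equiv f$, so that $(u(f),U(f))$ solves the Hamilton equations of the non-autonomous $\mathscr{K}(u,U,f)$.

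From here the Lagrangian chain takes over. For $u\neq 0$ the Legendre map \eqref{eqn:KSlegendre} is invertible, and the identity $l(u,b(u))=0$ already established in Subsection~\ref{subsec:regul} shows that $l(u,U)=0$ is equivalent to $l(u,u')=0$; hence $u(f)$ solves the Lagrange equations of $\mathcal{L}$ with $l=0$ at $f_0$. Proposition~\ref{prop:modlagreqq} then yields $u(f)$ as a solution of the $\mathscr{L}$-Lagrange equations; Proposition~\ref{prop:lagruq} gives $q(f)=\pi(u(f))$ as a solution of the $\widetilde{L}$-Lagrange equations; the translation \eqref{eqn:trans} produces $(x(f),y(f),z(f))$ solving the Lagrange equations of $L$ in \eqref{eqn:Lagr}; and the Legendre transform of $L$ with respect to $(x',y',z')$ defines the conjugate momenta $(p_1,p_2,p_3)$ and converts the $L$-Lagrange system into the Hamilton equations of $\mathcal{H}$ in \eqref{eqn:ER3BP3Dhamfintro}.

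The main obstacle I expect is the rescaling step. The product-rule computation $d\mathcal{K}=\Vert u\Vert^2\,d\widehat{\mathscr{K}}+\widehat{\mathscr{K}}\,d\Vert u\Vert^2$ is elementary, but one must carefully track that the error term proportional to $\widehat{\mathscr{K}}$ truly drops out along the flow (not merely pointwise at $s=0$), that the reparametrization $s\mapsto f(s)$ is a local diffeomorphism near $s=0$ (which follows from $u(0)\neq 0$ via the inverse function theorem), and that the autonomous extension through the pair $(\phi,\Phi)$ consistently recovers the original true-anomaly dependence of $\mathscr{K}$ through the identification $\phi(f)\equiv f$ fixed by~(iii). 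Once these points are verified, the statement follows by concatenating the already-proved propositions.
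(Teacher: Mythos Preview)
Your proposal is correct and follows essentially the same route as the paper. The paper states that, ``in light of what already derived in the previous subsections,'' only the rescaling step from $\mathcal{K}$ to $\widehat{\mathscr{K}}$ needs to be proved, and then verifies each of the Hamilton equations of $\widehat{\mathscr{K}}$ by an explicit chain-rule computation, using $\widehat{\mathscr{K}}=0$ along the flow to kill the extra term in $dU_i/df$; your product-rule argument $d\mathcal{K}=\Vert u\Vert^2\,d\widehat{\mathscr{K}}+\widehat{\mathscr{K}}\,d\Vert u\Vert^2$ on the zero level set is exactly the same content packaged more concisely, and the subsequent Lagrangian chain you spell out is what the paper defers to Propositions~\ref{prop:lagruq} and~\ref{prop:modlagreqq} and the Legendre transforms of Subsections~\ref{subsec:LagrRP}--\ref{subsec:regul}.
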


\begin{proof}
In light of what already derived in the previous subsections, we only need to prove the equivalence between the solutions associated to the transformed $\widehat{\mathscr{K}}$ and the regularized $\mathcal{K}$. Given the initial conditions $u_0,U_0,f_0$, let us consider the solution $(\widetilde{u}(s),\widetilde{\phi}(s),\widetilde{U}(s),\allowbreak\widetilde{\Phi}(s))$ of the Hamilton equations of $\mathcal{K}$ with:
  $$
  \widetilde{u}(0)=u_0\;,\ \ \widetilde{U}(0)=U_0\; ,\ \
  \widetilde{\phi}(0)=f_0\; ,\ \ \widetilde{\Phi}(0)=-\mathscr{K}(u_0,U_0,f_0)\;,
  $$
and $s$ in a neighborhood of $s=0$ such that $\Vert \widetilde{u}(s)\Vert >0$; in particular we have:
  $$
  \mathcal{K}(\widetilde{u}(s),\widetilde{\phi}(s),\widetilde{U}(s),\widetilde{\Phi}(s))=0
  $$
for all $s$. Next, consider:
\begin{equation*}
f(s)=f_0+\int_{0}^s\Vert \widetilde{u}(\sigma)\Vert^2\text{d}\sigma\;,
\end{equation*}  
which is invertible (since in the neighborhood of $s=0$ we have
$\Vert \widetilde{u}(s)\Vert >0$), and $(u(f),\phi(f),U(f),\Phi(f))$ defined by:
$$
u(f)= \widetilde u(s(f))\; ,\ \ U(f)= \widetilde U(s(f))\; ,\ \ \phi(f)=\widetilde \phi(s(f))\; ,\ \ \Phi(f)=
\widetilde \Phi(s(f))\;.
$$
We claim that $(u(f),\phi(f),U(f),\Phi(f))$ are the solutions
of the Hamilton equations of $\widehat{\mathscr{K}}$ with initial
conditions $(u(f_0),\phi(f_0),U(f_0),\Phi(f_0))= (u_0,f_0,U_0,
-\mathscr{K}(u_0,U_0,f_0))$. In fact, we have:
\begin{align*}
{\text{d}u_i\over \text{d}f}= {\text{d}s\over \text{d}f}\ { \text{d}\widetilde u_i\over \text{d} s}_{\Big\vert s= s(f)}&={1\over \Vert \widetilde{u}(s(f))\Vert^2}
\left [  \frac{\partial}{\partial U_i}\left(\Vert u\Vert^2\widehat{\mathscr{K}}\;\right)\right ]_{\Big\vert  (u,U,\phi,\Phi)= (\widetilde{u}(s),\widetilde{\phi}(s),\widetilde{U}(s),\widetilde{\Phi}(s)), s= s(f)}\\
&=\left [  \frac{\partial}{\partial U_i}\widehat{\mathscr{K}}\;\right ]
(\widetilde{u}(s(f)),\widetilde{\phi}(s(f)),\widetilde{U}(s(f)),\widetilde{\Phi}(s(f)))\\
&=\left [  \frac{\partial}{\partial U_i}\widehat{\mathscr{K}}\;\right ]
(u(f),\phi(f),U(f),\Phi(f))\;,
\end{align*}
as well as:
\begin{align*}
{\text{d} \phi \over \text{d}f}= {\text{d}s\over \text{d}f}\ { \text{d}\widetilde \phi\over \text{d} s}_{\Big\vert s= s(f)}&={1\over \Vert \widetilde{u}(s(f))\Vert^2}  
\left [  \frac{\partial}{\partial \Phi}\left(\Vert u\Vert^2\widehat{\mathscr{K}}\;\right)\right ]_{\Big\vert  (u,U,\phi,\Phi)= (\tilde{u}(s),\tilde{\phi}(s),\widetilde{U}(s),\widetilde{\Phi}(s)), s= s(f)}\\
&=\frac{\partial \widehat{\mathscr{K}}}{\partial \Phi}=1
\end{align*}
and:
\begin{align*}
{\text{d} U_i\over \text{d}f}= {\text{d}s\over \text{d}f}\ { \text{d}\widetilde U_i\over \text{d} s}_{\Big\vert s= s(f)}&=-{1\over \Vert \widetilde{u}(s(f))\Vert^2}  
\left [  \frac{\partial}{\partial u_i}\left(\Vert u\Vert^2\widehat{\mathscr{K}}\;\right)\right ]_{\Big\vert  (u,U,\phi,\Phi)= (\tilde{u}(s),\tilde{\phi}(s),\widetilde{U}(s),\widetilde{\Phi}(s)), s= s(f)}\\
&=-\left [  \frac{\partial}{\partial u_i}\widehat{\mathscr{K}}\;\right ]
(\tilde{u}(s(f)),\tilde{\phi}(s(f)),\widetilde{U}(s(f)),\widetilde{\Phi}(s(f)))\\
&=-\left [  \frac{\partial}{\partial u_i}\widehat{\mathscr{K}}\;\right ]
(u(f),\phi(f),U(f),\Phi(f))\;,
\end{align*}
where to obtain the second equality we used $\widehat{\mathscr{K}}(\widetilde{u}(s),\widetilde{\phi}(s),\widetilde{U}(s),\widetilde{\Phi}(s))=0$. Finally, we also have:
\begin{align*}
{\text{d} \Phi\over \text{d}f}= {\text{d}s\over \text{d}f}\ { \text{d}\widetilde \Phi\over \text{d} s}_{\Big\vert s= s(f)}&=-{1\over \Vert \widetilde{u}(s(f))\Vert^2}  
\left [  \frac{\partial}{\partial \phi}\left(\Vert u\Vert^2\widehat{\mathscr{K}}\;\right)\right ]_{\Big\vert  (u,U,\phi,\Phi)= (\tilde{u}(s),\tilde{\phi}(s),\widetilde{U}(s),\widetilde{\Phi}(s)), s= s(f)}\\
&=-\left [  \frac{\partial}{\partial \phi}\widehat{\mathscr{K}}\;\right ]
(\tilde{u}(s(f)),\tilde{\phi}(s(f)),\widetilde{U}(s(f)),\widetilde{\Phi}(s(f)))\\
&=-\left [  \frac{\partial}{\partial \phi}\widehat{\mathscr{K}}\;\right ]
(u(f),\phi(f),U(f),\Phi(f))\;.
\end{align*}
\end{proof}

\begin{remark}
    \label{rem:Ksympred}
The reason for the success of this Hamiltonian regularization
 is, as for the spatial CR3BP, the possibility to exploit the symmetry
  presented in Proposition \ref{prop:lagrrotinv} in the framework of symplectic reductions
 (see \cite{marsden1974reduction,meyer1973symmetries,saha,ZhaoRCD}) in the
 10-dimensional phase-space $(u,\phi,U,\Phi)$.
\end{remark}

\section{Fast close encounters in the ER3BP}
\label{sec:fastcloseencounters}

The regularized Hamiltonian (\ref{eqn:hamregintro}) has the representation:
\begin{equation}
	\label{eqn:KfunR}
\mathcal{K}=\frac{1}{8}\Vert U-b(u)\Vert^2
-\Vert {u}\Vert^2\left [-\Phi + {3-4 \mu +\mu^2\over 2 (1+\varepsilon \cos\phi)}
\right ]-{\mu \over 1+\varepsilon \cos\phi} +\mathcal{R}_6(u,\phi)\;, 
\end{equation}
where
\begin{eqnarray}
\mathcal{R}_6 &=& -\frac{1}{1+\varepsilon\cos\phi}\bigg[(1-\mu)\Vert u\Vert^2\bigg(\frac{1}{\Vert\pi(u)+(1,0,0)\Vert}+\pi_1(u)-1\bigg)\cr
&+&\frac 12\Vert u\Vert^2\left(\pi_1^2(u)+\pi_2^2(u)-\pi_3^2(u)\varepsilon\cos \phi\right)\bigg ]
\end{eqnarray}
has Taylor expansion with respect to the variables $u$ starting
at order 6. In fact, the Taylor expansion of $1/\Norm{\pi +(1,0,0)}+\pi_1 -1$
with respect to $\pi_1,\pi_2,\pi_3$ starts at order $2$, and the $\pi_j$ are
quadratic functions of the $u_i$. We provide 
upper bounds to the remainder
$\mathcal{R}_6$ and its derivatives in the ball:
\begin{equation}
B(\mu^{1\over 6})= \{u\in {\Bbb R}^4\colon \Norm{u}< \mu^{1\over 6}\}\;,
\label{usphere}
\end{equation}
projecting to the ball in the Cartesian
space:
\begin{equation}
B(\mu^{1\over 3})= \{q\in {\Bbb R}^3\colon \Norm{q}< \mu^{1\over 3}\}\;.
\label{xyzsphere}
\end{equation}
 We will refer to both
 spheres (\ref{usphere}) and (\ref{xyzsphere}) as the Hill's sphere (notice that
 the terminology is different from the conventional one by a numerical 
 factor in the radius of the sphere).

 Below, we denote by $\mu_0$, by $c_1,c_2,\ldots >0$ and by $\gamma_1,\gamma_2,\ldots >0$ constants which are independent of $\mu$ and $\varepsilon$, as well as on the parameter $\Gamma_0$ which will be later introduced to characterize each close encounter. 
\vskip 0.4 cm
\noindent
    {\bf (i)} There exist constants
$c_1,c_2,c_3,c_4>0$ and $\mu_0 \in (0,1/10)$ such that, 
for any $\mu\leq \mu_0$, $\varepsilon \in (0,1)$, $u\in B(\mu_0^{1\over 6})$ and $\phi \in [0,2\pi]$, we have:
\begin{eqnarray}
  \norm{\mathcal{R}_6(u,\phi)} &\leq& {c_1\over 1-\varepsilon} \Norm{u}^6\label{c1}\;,\\
  \Norm{b(u)} &\leq & c_2 \Norm{u}^{3} \label{c2}\;,\\
 \sum_{i} \norm{\sum_j u_j \left ({\partial b_i \over \partial u_j}- {\partial b_j \over \partial u_i}\right )}
    &\leq& c_3 \Norm{u}^{3}\; \label{c3}\;,\\
\norm{\sum_{i} u_{i} {\partial \mathcal{R}_6\over \partial u_{i}}} &\leq& {c_4\over 1-\varepsilon} \Norm{u}^6\;.\label{c4}
  \end{eqnarray}
The proof is reported in Subsection \ref{c1c2c3c4}.
\vskip 0.4 cm
\noindent
{\bf (ii)} We consider solutions $q(f)$ of the ER3BP transiting through the
Hill's sphere of the secondary body $P_2$ in the $f$-interval $[f_1,f_2]$, i.e.,
$\Norm{q(f_1)}=\Norm{q(f_2)}=\mu^{1/3}$ and $0<\Norm{q(f)}<\mu^{1\over 3}$ for all $f\in (f_1,f_2)$, and denote by $(u(s),\phi(s),U(s),\Phi(s))$ a solution of the Hamilton equations of the regularized Hamiltonian ${\cal K}$ 
satisfying the hypotheses of Subsection \ref{subsec:conj} such that
$\pi(u(s))=q(f(s))$ for all $s\in[s_1,s_2]$ ($s_1,s_2$ corresponding to $f_1,f_2$). If $\mu \leq \mu_0$, there exists a constant $\cvarGamma>0$ such that, during the transit in the Hill's sphere,  the variation of the parameter $\Gamma_s$ introduced in Section \ref{sec:intro} satisfies:
\begin{equation}
\norm{\frac{\text{d}\Gamma_s}{\text{d}s}}  \le \cvarGamma \frac{\varepsilon\mu  }{(1-\varepsilon)^2}\;.
\label{varGamma}
\end{equation}
The proof is reported in  Subsection \ref{Gamma'section}. \\
A direct consequence of (\ref{varGamma}) is: by considering transits with $\Gamma_0\coloneqq\Gamma_{s_1} >0$ and proper time intervals $\Delta s>0$ satisfying:
\begin{equation}
\Delta s   \frac{\varepsilon\mu  }{(1-\varepsilon)^2} \le {\Gamma_0\over 2 \cvarGamma}\;,
\label{estimate-transit}
\end{equation}
for all $s \in [s_1,s_1+\Delta s]\cap [s_1,s_2]$ we have $\Gamma_s \in  \left [\Gamma_0/2,3\Gamma_0/2\right ]$; in particular since  
$\Gamma_s\geq \Gamma_0/2>0$, the matrix ${\cal X}_0$ is hyperbolic.  

\vskip 0.2 cm
\noindent
    {\bf (iii)} For all $s \in [s_1,s_1+\Delta s]\cap [s_1,s_2]$,
    using ${\cal K}(u(s),\phi(s),U(s),\Phi(s))=0$ and \eqref{c1}, we obtain from \eqref{eqn:KfunR}:
$$
\frac{1}{8}\Vert U-b(u)\Vert^2
= \Gamma_s \Vert {u}\Vert^2+ {\mu \over 1+\varepsilon \cos\phi} -\mathcal{R}_6(u,\phi)
  \leq  {3\over 2}\Gamma_0 \Norm{u}^2 +
        {\mu \over 1-\varepsilon}+ {c_1\over 1-\varepsilon}\Norm{u}^6
        $$
and therefore there exists $\cU >0$ such that:
\begin{equation}
	\label{eqn:Uminb}
\Vert U-b(u)\Vert \leq \cU  \sqrt{\Gamma_0\Norm{u}^2 + {\mu \over 1-\varepsilon}}\;;
\end{equation}
by Cauchy-Schwarz inequality $\Norm{U-b}\ge\norm{\Norm{U}-\Norm{b}}$, thus, using \eqref{c2}:
$$
\Vert U\Vert \leq  \cU \sqrt{\Gamma_0\Norm{u}^2 + {\mu \over 1-\varepsilon}}
+c_2 \Norm{u}^3\;.
$$
Consequently,  $\Norm{ U(s)},\Norm{u(s)}$ are bounded by quantities which,
for $s \in [s_1,s_1+\Delta s]\cap [s_1,s_2]$, are small with $\mu^{1\over 6}$.
\vskip 0.4 cm 

\vskip 0.2 cm
\noindent
    {\bf (iv)} It remains to show that, with a suitable smallness condition on $\mu$, each transit in the Hill's sphere with $\Gamma_0>0$ occurs  in proper time intervals $\Delta s$ satisfying (\ref{estimate-transit}). First, we prove that there exists $c_7>0$ (indepedent  on $\mu,\varepsilon,\Gamma_0$) such that if $\mu$ satisfies:
\begin{equation}
\mu  \leq c_7 (1-\varepsilon)^2\Gamma_0^{3\over 2}  
\label{thrmu}
\end{equation}
and $\Delta s$ satisfies  (\ref{estimate-transit}), 
for all $s \in [s_1,s_1+\Delta s]\cap [s_1,s_2]$ we have:
\begin{equation}
\frac{\text{d}^2\rho}{\text{d}s^2}\geq  {\Gamma_0\over 2} \rho + {\mu \over 1+\varepsilon}\;, 
\label{rho''}
\end{equation}
where $\rho(s)\coloneqq \Norm{u(s)}^2$ (the proof is reported in Subsection \ref{rho''section}). 
\vskip 0.4 cm
\noindent
       {\bf (v)} Finally, we prove that if $\mu$ satisfies the
       inequality:
      \begin{equation}
\mu  < c (1-\varepsilon)^6\Gamma_0^{3\over 2}  
\label{thrmu2}
\end{equation} 
      with suitable $c>0$, the motion exits from the Hill's sphere
      within a proper time interval satisfying  (\ref{estimate-transit}).
      \vskip 0.4 cm
We first restrict the choice of $c$ in the interval $0<c\leq c_7$, so that       (\ref{thrmu2})  is stronger than  (\ref{thrmu}), i.e., the right-hand side of the former is smaller or equal than the one of the latter. As long as $\Delta s$ satisfies (\ref{estimate-transit})
and $\mu$ satisfies (\ref{thrmu2}), 
$\rho(s)$ satisfies (\ref{rho''}) and consequently we have:
\begin{eqnarray}
\rho(s)-\rho(s_1) &=& \frac{\text{d}\rho}{\text{d}s}(s_1)(s-s_1) + \int_{s_1}^s \left(\int_{s_1}^\tau \frac{\text{d}^2\rho}{\text{d}\sigma^2}(\sigma)\text{d}\sigma \right)\text{d}\tau\cr
& \geq&  \frac{\text{d}\rho}{\text{d}s}(s_1)(s-s_1) + \int_{s_1}^s\left(\int_{s_1}^\tau
            \left ({\Gamma_0\over 2}
            \rho(\sigma)+{\mu \over 1+\varepsilon}\right ) 
            \text{d}\sigma\right) \text{d}\tau\;.
            \label{rhoineq}
            \end{eqnarray}
            Since $\mu$ satisfies (\ref{thrmu}) and $\Norm{u(s_1)}=\mu^{\frac16}$, combining Cauchy-Schwarz inequality, Hamilton equations and \eqref{eqn:Uminb} we have:
            \begin{align*}
            \norm{\frac{\text{d}\rho}{\text{d}s}(s_1)}&\leq 2 \Norm{u(s_1)}\Norm{\frac{\text{d}u}{\text{d}s}(s_1)} =
            {1\over 2}\mu^{1\over 6}\Norm{U(s_1)-b(u(s_1))}\\
            &\leq {\cU\over 2} \mu^{1\over 6}\sqrt{\Gamma_0 \mu^{1\over 3}+{\mu
           \over 1-\varepsilon}} \leq c_8
       \mu^{1\over 3}\sqrt{\Gamma_0}\;,
       \end{align*}
       with $c_8$ depending only on $\cU,c_7$. 
       Therefore, from (\ref{rhoineq}) we obtain:
       \begin{equation}
       \rho(s)-\rho(s_1) > - c_8 \mu^{1\over 3}\sqrt{\Gamma_0}(s-s_1)
       +{\mu \over 2(1+\varepsilon)}(s-s_1)^2\;.
       \label{exit}
       \end{equation}
       Let us now consider from \eqref{exit} the inequality:
       \begin{equation}
       - c_8 \mu^{1\over 3}\sqrt{\Gamma_0}\omega
       + {\mu \over 2(1+\varepsilon)}\omega^2 > 0\;,
       \label{wineq}
       \end{equation}
       which is satisfied in particular by all $\omega>\omega_1>0$, with:
       $$
       \omega_1\coloneqq  2(1+\varepsilon)c_8 {\sqrt{\Gamma_0} \over  \mu^{2\over 3}}\;.
    $$
      Since:
       $$
       \omega_1 <  4c_8 {\sqrt{\Gamma_0}\over \mu^{2\over 3}}\eqqcolon\omega_2\;,
       $$
       inequality (\ref{wineq}) is satisfied
       in particular by all $\omega \geq \omega_2$. If $\mu$ satisfies the further inequality:
       \begin{equation}
         {\varepsilon \mu^{1\over 3}\over (1-\varepsilon)^2} \le {\sqrt{\Gamma_0}
           \over 8 c_5 c_8}\;,
         \label{ineqq}
       \end{equation}
    then the choice for $\Delta s=\omega_2$ satisfies inequality (\ref{estimate-transit}).
    Now we can state that $s_2 \in [s_1,s_1+\omega_2)$. In fact,
      if $s_1+\omega_2 \leq s_2$, then 
      inequality (\ref{exit}) is valid also for $s=s_1+\omega_2$:
       $$
       \rho(s_1+\omega_2)> \rho(s_1)=\mu^{1\over 6}\;,
       $$
       which contradicts the hypothesis $u(s_1+\omega_2)\in B(\mu^{1\over 6})$.
       Therefore, the motion exits from the Hill's sphere at a proper
    time $s \in (s_1,s_1+\omega_2)$.\\
Finally we notice that (\ref{ineqq}) is satisfied if:
    \begin{equation}
    \mu \leq c_9(1    -\varepsilon)^6 \Gamma_0^{3\over 2}\;.
    \label{thr3}
    \end{equation}
    with suitable constant $c_9$, 
    and therefore both inequalities (\ref{thr3}) and (\ref{thrmu})
    are satisfied by the $\mu$ satisfying (\ref{thrmu2}),
    provided that $c<\min\{c_7,c_9\}$.
    \vskip 0.4 cm
    \noindent
        {\bf (vi)} Therefore, we have proved that if $\mu$ satisfies
        a smallness condition, the matrix ${\cal X}_0$ is hyperbolic
        during the transit in the Hill's sphere. Moreover, if the Hill's
        radius is suitably small (the smallness condition depending possibly
        on $\Gamma_0$), which is obtained by introducing another smallness condition on $\mu$, then also the matrix ${\cal X}$ is hyperbolic
during the transit in the Hill's sphere. Indeed, in virtue of \eqref{eqn:matX}, \eqref{eqn:matX0} and the above bounds, the eigenvalues of $\mathcal{X}$ differ from those of $\mathcal{X}_0$ by terms of order $\mu^{1/3}$. 

\subsection{Proof of inequalities (\ref{c1}),(\ref{c2}),(\ref{c3}),(\ref{c4})}\label{c1c2c3c4}

Inequalities  (\ref{c2}) and (\ref{c3}) follow immediately from the
fact that the vector $b(u)$ is cubic in the $u_i$. Indeed:
\[
\Norm{b}^2=\norm{b\cdot b}=4\Norm{u}^2\norm{Au\cdot \Lambda^2 Au}\le4\Norm{u}^6
\]
thanks to $\Norm{\Lambda^2}=1$ (operator norm induced by the Euclidean vector norm), \eqref{eqn:Aprop} and Cauchy-Schwarz inequality, so it suffices that $c_2\ge2$. Consequently, $|\partial b_i/\partial u_j|\le c_*\Norm{u}^2$, $c_*>0$, thus (\ref{c3}) holds.\\ 
\indent To prove inequalities (\ref{c1}) and (\ref{c4}) we first replace in $\mathcal{R}_6$ the only non-polynomial term  with its Taylor expansion of order $1$ in the $\pi_1,\pi_2,\pi_3$ with remainder $\Delta_4$ (of order $2$ in the $\pi_j$, of order $4$ in the $u_i$):
\begin{equation}
{1\over {\Vert\pi+(1,0,0)\Vert}} = {1\over \sqrt{(\pi_1+1)^2+\pi_2^2+\pi_3^2}}=  1-\pi_1+\Delta_4\;,
\label{Delta4}
\end{equation}
and we obtain:
\begin{equation}
\mathcal{R}_6 = -\frac{\Vert u\Vert^2}{1+\varepsilon\cos\phi}\bigg[(1-\mu) \Delta_4
+\frac 12\left(\pi_1^2(u)+\pi_2^2(u)-\pi_3^2(u)\varepsilon\cos \phi\right)\bigg ]\;.
\end{equation}
Since for $\mu \leq \mu_0 < 1/10$  we have $d_2<(1/10)^{1/3}<1/2$, $d_1> 1/2$ in $B(\mu^{\frac13})$
as well as:
$$
1-d_1(\pi_1-1)= 1+d_1-d_1 \pi_1 \geq 1 +d_1(1-d_2)>1\;,
$$
we obtain:
$$
\norm{\Delta_4 } = \norm{{1\over d_1}+\pi_1-1} =
\norm{{1 +d_1(\pi_1-1)\over d_1}} 
=\norm{{1-d_1^2(\pi_1-1)^2 \over d_1 (1+d_1-d_1\pi_1)}}<
2\norm{1-d_1^2(\pi_1-1)^2}\;.
$$
Since $1-d_1^2(\pi_1-1)^2$ is a polynomial in $\pi_1,\pi_2,\pi_3$
with terms of  order $2,3,4$, there exists a constant $\cdelta$ such that:
\begin{equation}
\norm{\Delta_4}\leq \cdelta \Norm{u}^4\;.
\label{upperbounddelta4}
\end{equation}
Inequality (\ref{c1}) follows easily.\\
We proceed by estimating the derivatives of $\Delta_4$:
$$
{\partial \Delta_4 \over\partial u_j}=
\sum_{h=1}^3{\partial \Delta_4 \over\partial \pi_h}{\partial \pi_h\over
    \partial u_j}\;.
$$
Since the $\pi_h$ are quadratic function of the $u_j$, with $|\pi_h|\le\Norm{u}^2$, and since $1-d_1^6$ is a polynomial containing terms of least order in the $\pi_i$ equal to 1, 
there exists a positive constant $\gamma_*$ such that:
\begin{equation}
	\label{eqn:1minusd16}
	\norm{1-d_1^6}\le\gamma_*\Norm{u}^2
\end{equation}
and a positive constant $\cderdelta$ such that:
  \begin{equation*}
  \norm{{\partial \Delta_4 \over\partial \pi_2}{\partial \pi_2\over
    \partial u_j}} = \frac{\norm{ \pi_2}}{d_1^3} \norm{ {\partial \pi_2\over
    \partial u_j}} < 8  \norm{ \pi_2} \norm{ {\partial \pi_2\over
      \partial u_j}} \leq \cderdelta\Norm{u}^3\;,
  \end{equation*}
  \begin{equation*} 
  \norm{{\partial \Delta_4 \over\partial \pi_3}{\partial \pi_3\over
    \partial u_j}} = \frac{\norm{ \pi_3}}{d_1^3} \norm{ {\partial \pi_3\over
    \partial u_j}} < 8  \norm{ \pi_3} \norm{ {\partial \pi_3\over
      \partial u_j}} \leq \cderdelta\Norm{u}^3\;,
  \end{equation*}
  \begin{align*}
  	\norm{{\partial \Delta_4 \over\partial \pi_1}{\partial \pi_1\over
    \partial u_j}} =\frac{\norm{ 1+\pi_1-d_1^3}}{d_1^3}\norm{\partial \pi_1\over
    \partial u_j}&< 8 (\norm{\pi_1}+\norm{1-d_1^3})\norm{\partial \pi_1\over
    \partial u_j}= 8\left (\norm{\pi_1}+\frac{\norm{1-d_1^6}}{1+d_1^3}\right )\norm{\partial \pi_1\over
    \partial u_j}\\
  &< 8\left (\norm{\pi_1}+\norm{1-d_1^6}\right )\norm{\partial \pi_1\over
    \partial u_j}\leq \cderdelta\Norm{u}^3\;.
    \end{align*}
Inequality (\ref{c4}) follows straightforwardly.

\subsection{Proof of inequality (\ref{varGamma})}\label{Gamma'section}

In:
\begin{eqnarray}
  \frac{\text{d}\Gamma_s}{\text{d}s} &=& {\partial {\cal K}\over \partial \phi}
  + {(3 -4 \mu +\mu^2) \varepsilon \sin\phi\over 2(1+\varepsilon \cos\phi)^2}
  \ {\partial {\cal K}\over \partial \Phi}\cr
  &=&  -\frac{\varepsilon\sin\phi }{(1+\varepsilon\cos\phi)^2}
\bigg(  (1-\mu)\Vert u\Vert^2\bigg(\frac{1}{\Vert\pi(u)+(1,0,0)\Vert}+\pi_1(u)\bigg)\cr
  &+&\mu+\frac 12\Vert u\Vert^2\left(\pi_1^2(u)+\pi_2^2(u)-\pi_3^2(u)\varepsilon\cos \phi\right)+\frac{(1-\mu)^2}{2}\Vert u\Vert^2\cr
&+& {1\over 2}(1+\varepsilon\cos\phi)
\Vert u\Vert^2 \pi_3^2(u)  - {3 -4 \mu +\mu^2 \over 2}\Norm{u}^2\bigg)
\end{eqnarray}
we first replace the only non-polynomial term with the representation
(\ref{Delta4}), 
 and then we simplify the expression obtaining:
\begin{equation}
 \frac{\text{d}\Gamma_s}{\text{d}s} =   -\frac{\varepsilon\sin\phi }{(1+\varepsilon\cos\phi)^2}
\bigg(\mu +\Vert u\Vert^2 (1-\mu)\Delta_4
+{1\over 2}\Vert u\Vert^2\Norm{\pi}^2\bigg)\;.
\end{equation}
Using (\ref{upperbounddelta4}), $\Norm{\pi}^2=\Norm{u}^4$, and $\Norm{u}<\mu^{1\over 6}$ in
the Hill's sphere (as well as $1-\mu < 1$), we easily obtain
(\ref{varGamma}).

\subsection{Proof of inequality (\ref{rho''})}\label{rho''section}

First, we represent $\text{d}^2\rho/\text{d}s^2 = 2 u\cdot\text{d}^2u/\text{d}s^2 + 2 \Norm{\text{d}u/\text{d}s}^2$ 
by replacing $\text{d}^2u/\text{d}s^2$ using the Hamilton equations of ${\cal K}$ given in the
form (\ref{eqn:KfunR}):
\begin{equation*}
  \frac{\text{d}^2\rho}{\text{d}s^2}=\Gamma_s \Norm{u}^2 +  2 \Norm{\frac{\text{d}u}{\text{d}s}}^2
  +{1\over 8} \sum_i (U_i-b_i) \sum_j  u_j \left ( {\partial b_i\over \partial u_j}- {\partial b_j\over \partial u_i}\right )
  -\frac12 \sum_{i} u_{i} {\partial \mathcal{R}_6 \over \partial u_{i}}\;,
  \end{equation*}
and then we replace $2\Norm{\text{d}u/\text{d}s}^2=(1/8)\Norm{U-b}^2$ using ${\cal K}(u(s),\phi(s),U(s),\Phi(s))=0$:
\begin{eqnarray}
  \frac{\text{d}^2\rho}{\text{d}s^2}& = &2\Gamma_s \Norm{u}^2  +{\mu \over {1+\varepsilon \cos\phi}}\cr
& &+{1\over 8} \sum_i (U_i-b_i) \sum_j u_j \left ( {\partial b_i\over \partial u_j}- {\partial b_j\over \partial u_i}\right )
-\frac12\sum_{i} u_{i} {\partial \mathcal{R}_6 \over \partial u_{i}} -\mathcal{R}_6\;.
\label{rho''est}
\end{eqnarray}
Using \eqref{c1}, \eqref{c4} and \eqref{c3}, \eqref{eqn:Uminb}, the terms in the second line of (\ref{rho''est}) are bounded by:
\begin{equation*}
  \norm{\frac12\sum_{i} u_{i} {\partial \mathcal{R}_6 \over \partial u_{i}} + \mathcal{R}_6 }
  \leq \frac12 \norm{\sum_{i} u_{i} {\partial \mathcal{R}_6 \over \partial u_{i}}}+\norm{\mathcal{R}_6}
\leq   \left(c_1+\frac12 c_4\right)\frac{1}{1-\varepsilon} \Norm{u}^6\;,
 \end{equation*} 
$$
  \norm{
    \sum_i (U_i-b_i) \sum_j u_j \left ( {\partial b_i\over \partial u_j}-{\partial b_j\over \partial u_i}\right )} \leq
  \Norm{U-b}\sum_i \norm{\sum_ju_j \left ( {\partial b_i\over \partial u_j}- {\partial b_j\over \partial u_i}\right )}\leq 
  $$
  $$
  \leq c_3 \Norm{U-b} \Norm{u}^3 
  \leq
 c_3 \cU \Norm{u}^3 \sqrt{\Gamma_0\Norm{u}^2 + {\mu \over 1-\varepsilon}}\; .
$$
Therefore, using $\Gamma_s\geq \Gamma_0/2$ and \eqref{usphere}, there exists a constant
 $\gamma_3$ such that:
\begin{eqnarray}
  \frac{\text{d}^2\rho}{\text{d}s^2}& \geq & \Gamma_0 \Norm{u}^2  +{\mu \over 1+\varepsilon}
  - {c_3c_6\over 8} \Norm{u}^3 \sqrt{\Gamma_0\Norm{u}^2 + {\mu \over 1-\varepsilon}}- \frac{2c_1+c_4}{2(1-\varepsilon)} \Norm{u}^6\cr
  & \geq & \Gamma_0 \Norm{u}^2  +{\mu \over {1+\varepsilon}}
  -\gamma_3 \Norm{u}^3\max \left \{  \Norm{u}\sqrt{{ \Gamma_0}}, {\sqrt{\mu} \over
    \sqrt{1-\varepsilon}}, { {\sqrt{\mu} \over 1-\varepsilon} } \right \}\cr
  & = & \Gamma_0 \Norm{u}^2  +{\mu \over {1+\varepsilon}}
  -\gamma_3 \Norm{u}^3\max \left \{  \Norm{u}\sqrt{{ \Gamma_0}}, { \sqrt{\mu} \over 1-\varepsilon} \right \}\;.
\label{rho''est2}
\end{eqnarray}
Let us now consider the two cases:
\begin{itemize}
\item[-] In the case: $\Norm{u}\sqrt{{ \Gamma_0}} \leq \sqrt{\mu}/(1-\varepsilon)$,
  inequality (\ref{rho''est2}) becomes:
  $$
  \frac{\text{d}^2\rho}{\text{d}s^2} \geq  \Gamma_0 \Norm{u}^2  +{\mu \over {1+\varepsilon}}
  -\gamma_3  \Norm{u}^3   {\sqrt{\mu}\over {1-\varepsilon}} \geq
\Gamma_0 \Norm{u}^2  +{\mu \over {1+\varepsilon}}
-\gamma_3  \Norm{u}^2   {\mu \over (1-\varepsilon)^2\sqrt{\Gamma_0}}
$$
$$
\geq
  {\Gamma_0\over 2} \Norm{u}^2  +{\mu \over {1+\varepsilon}}\;,
  $$
  where the last inequality is true if we assume:
  \begin{equation}
  \mu\leq (1-\varepsilon)^2{\Gamma_0^{3\over 2}\over 2 \gamma_3}\;.
  \label{muthr1}
  \end{equation}
\item[-] In the case $ \Norm{u}\sqrt{{ \Gamma_0}} \geq  \sqrt{\mu}/(1-\varepsilon)$,
   inequality (\ref{rho''est2}) becomes:
  $$
  \frac{\text{d}^2\rho}{\text{d}s^2} \geq  \Gamma_0 \Norm{u}^2  +{\mu \over {1++\varepsilon}}
  -\gamma_3  \Norm{u}^4   \sqrt{\Gamma_0}
  $$
  $$
  \geq
  {\Gamma_0\over 2} \Norm{u}^2  +{\mu \over {1+\varepsilon}}\;,
  $$
  where the last inequality is true for all $u$
  satisfying:
  $$
  \Norm{u}^2 \leq { \sqrt{\Gamma_0} \over 2 \gamma_3}\;,
  $$
  which is satisfied for all $ u\in B(\mu^{\frac16})$ if 
  $\mu^{1\over 3}\leq \sqrt{\Gamma_0}/(2 \gamma_3)$, 
  or equivalently:
\begin{equation}
  \mu \leq {1\over 8\gamma_3^3} \Gamma_0^{3\over 2}\;.
  \label{muthr2}
  \end{equation}
\end{itemize}
Therefore, we have proved inequality (\ref{rho''}) 
if $\mu$ satisfies (\ref{muthr1}) and (\ref{muthr2}), which are both satisfied
by inequality (\ref{thrmu}) for any $c_7$ such that:
$$
c_7 \leq \min \left\{ {1\over 8\gamma^3_3},{1\over 2 \gamma_3} \right\}\;.
$$

\section{Numerical examples} 
\label{sec:numericexamples}

\subsection{The advantage of regularization: a numerical test}
\label{sec:numeric}
In order to assess the effectiveness of KS regularization of the ER3BP 
near the singularity at $P_2$ for numerical integrations we
consider a fictitious simple scenario which is nevertheless representative
(for the choice of the initial conditions) of
realistic close encounters in the Solar System, such as the non-coplanar close encounters of comets with Jupiter, in the Sun-Jupiter ER3BP (here identified by the values $\mu=9.536433730801362\cdot10^{-4}$, $\varepsilon=0.0489$). Precisely we consider the case, critical for the numerical integrations, of fast close encounters, where here we mean `fast' both in the sense of Section \ref{sec:fastcloseencounters} (see the caption of Fig. \ref{fig:backforw}) and that the close encounter does not produce a temporary capture through the Lagrangian points $L_1,L_2$. We emphasize that fast close encounters are observed for celestial bodies in the Solar System (see for example \cite{GL15,GL17} where the dynamics of comet 67P Churyumov-Gerasimenko, target of the recent Rosetta mission, is discussed in detail), and are used in Astrodynamics to accelerate spacecrafts.\\
\indent In this subsection we consider a model example numerically integrated using a quadruple precision floating-point format with an explicit fixed step numerical integrator of the Runge-Kutta family (Luther's method \cite{luther1968explicit}, RK6 for brevity) to analyze possible gains in the use of the KS regularization. The choice of using explicitly a fixed step integrator is motivated by the need to avoid any  interference of a variable step strategy with the regularization, which automatically performs the reduction of the step size by adopting a fictitiuos independent variable. We also remark that, even if the RK6 integrator is not symplectic, it does not produce a relevant energy loss since fast close encounters occurr in small time intervals.\\
\indent Therefore, we choose orbits with initial conditions characterized by 
a high initial energy $\mathscr{E}> \mathscr{E}_4\coloneqq J(x_{L_4},y_{L_4},z_{L_4},x^{\prime}_{L_4},y^{\prime}_{L_4},z^{\prime}_{L_4},f_0)$, where $J$ is the $f$-dependent Jacobi ``integral''\footnote{We recall that $J$ is not a first integral for the eccentricities $\varepsilon>0$; therefore, the choice $\mathscr{E}>\mathscr{E}_4$ provides only a trial initial condition for having a fast close encounter.}:
\begin{multline}
\label{eqn:jacobi}
J(x,y,z,x^{\prime},y^{\prime},z^{\prime},f)=\frac12\left((x^{\prime})^2+(y^{\prime})^2+(z^{\prime})^2\right)\\
-\frac{1}{1+\varepsilon\cos f}\left(\frac{1-\mu}{d_1}+\frac{\mu}{d_2}+\frac{1}{2}(x^2+y^2-z^2\varepsilon\cos f)\right)\;,
\end{multline}
having, in the vicinity of the secondary body, an important deflection of the trajectory with respect to the
solutions of the Kepler problem defined by the Sun. Moreover,
we consider orbits which are non-planar with Jupiter's orbit. An efficient
way to visualize the outcome of the regularization on fast close encounters
is to consider the initial conditions already at their minimum distance
from Jupiter, with $\mathscr{E}> \mathscr{E}_4$, with inclination
different from $0$, then to numerically integrate the orbit by first running a backward integration up to a sufficiently large distance (at least $d_2=\|P-P_2\|>1=\|P_2-P_1\|$), and then to switch to a forward integration lasting exactly twice the number of iterations of the previous operation, so that the upshot produces almost equal-length branches before and after the encounter (blue and red lines in top panels of Fig. \ref{fig:backforw}). In such a way, we appreciate the whole dynamics with the deviation caused by the planet. More details on the choice of the initial conditions are given in the caption of Fig. \ref{fig:backforw}; the physical parameters are drawn from the NASA Planetary Fact Sheet.\footnote{\url{https://nssdc.gsfc.nasa.gov/planetary/factsheet/}}
The details of implementation
  of the numerical integration, including also the 
  formulas for the computation of the initial condition in $(u,\phi,U,\Phi)$ variables for any given Cartesian
  initial condition $(x,y,z,p_1,p_2,p_3)$, are provided in Appendix  \ref{subsecapp:implem}.

\begin{figure}
	\centering
	\includegraphics[scale=0.43]{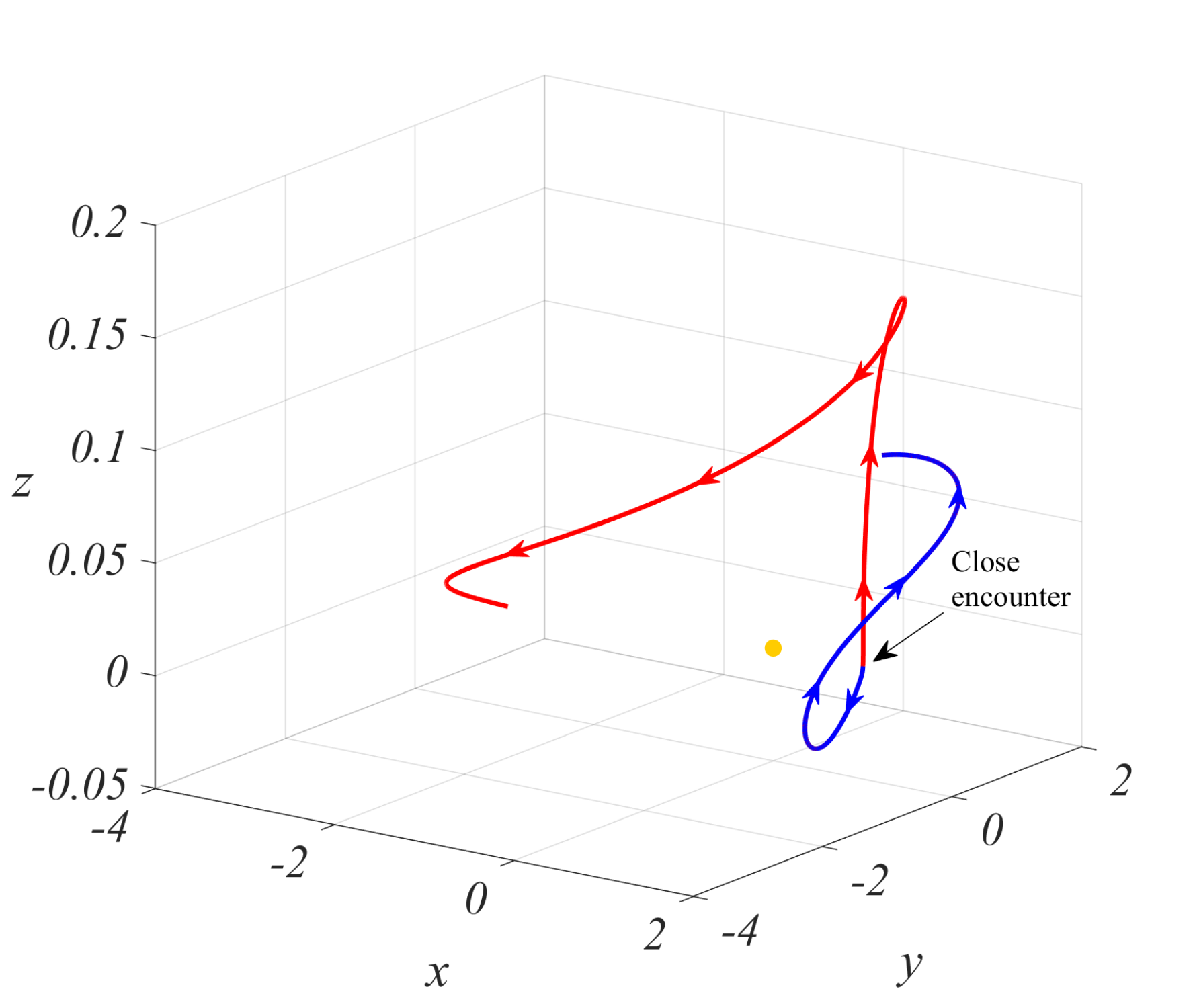}\\
	\includegraphics[scale=0.39,trim={11cm 0.9cm 11cm 1.4cm},clip]{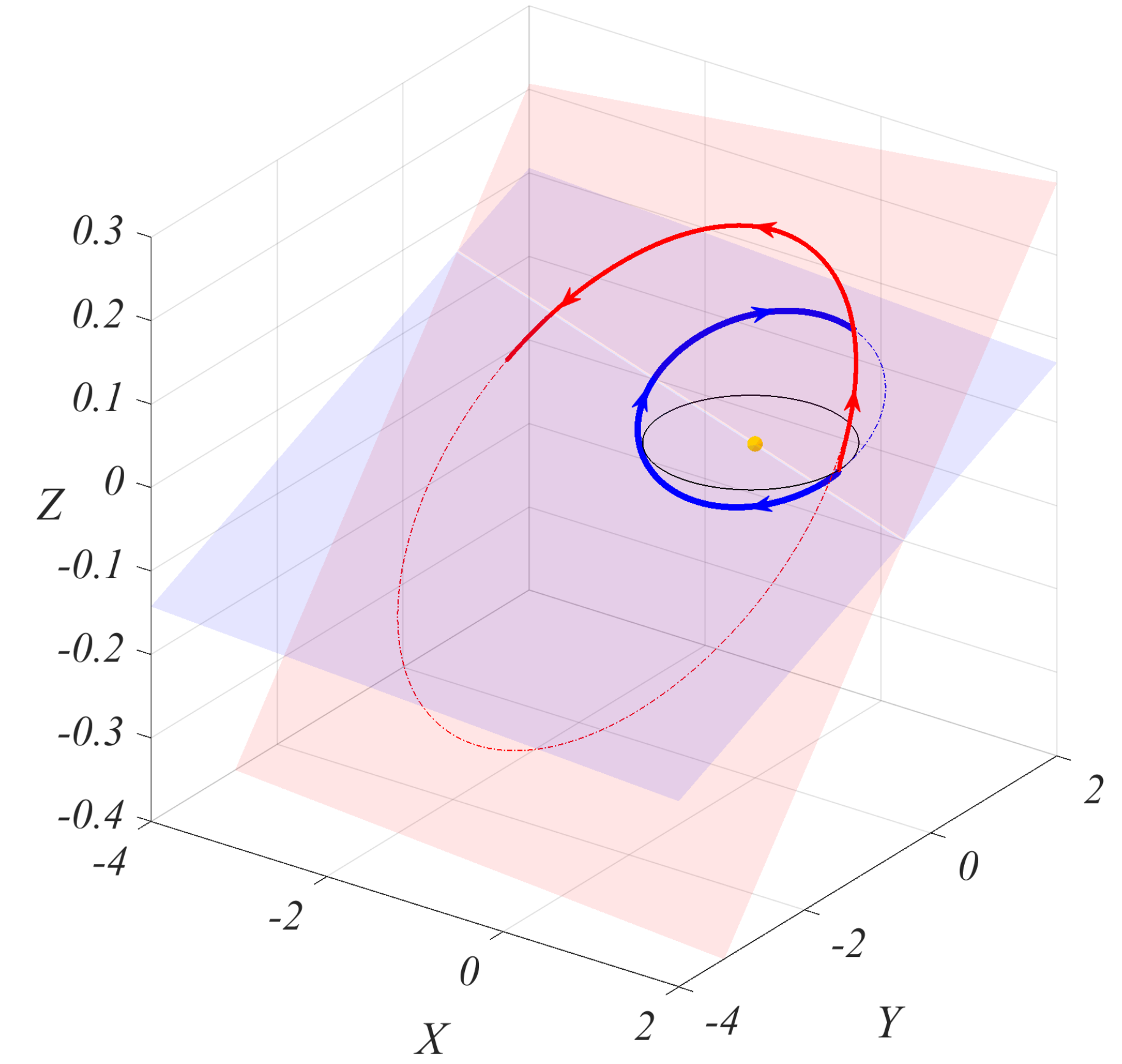}
	\includegraphics[scale=0.39,trim={11cm 0.9cm 11cm 1.4cm},clip]{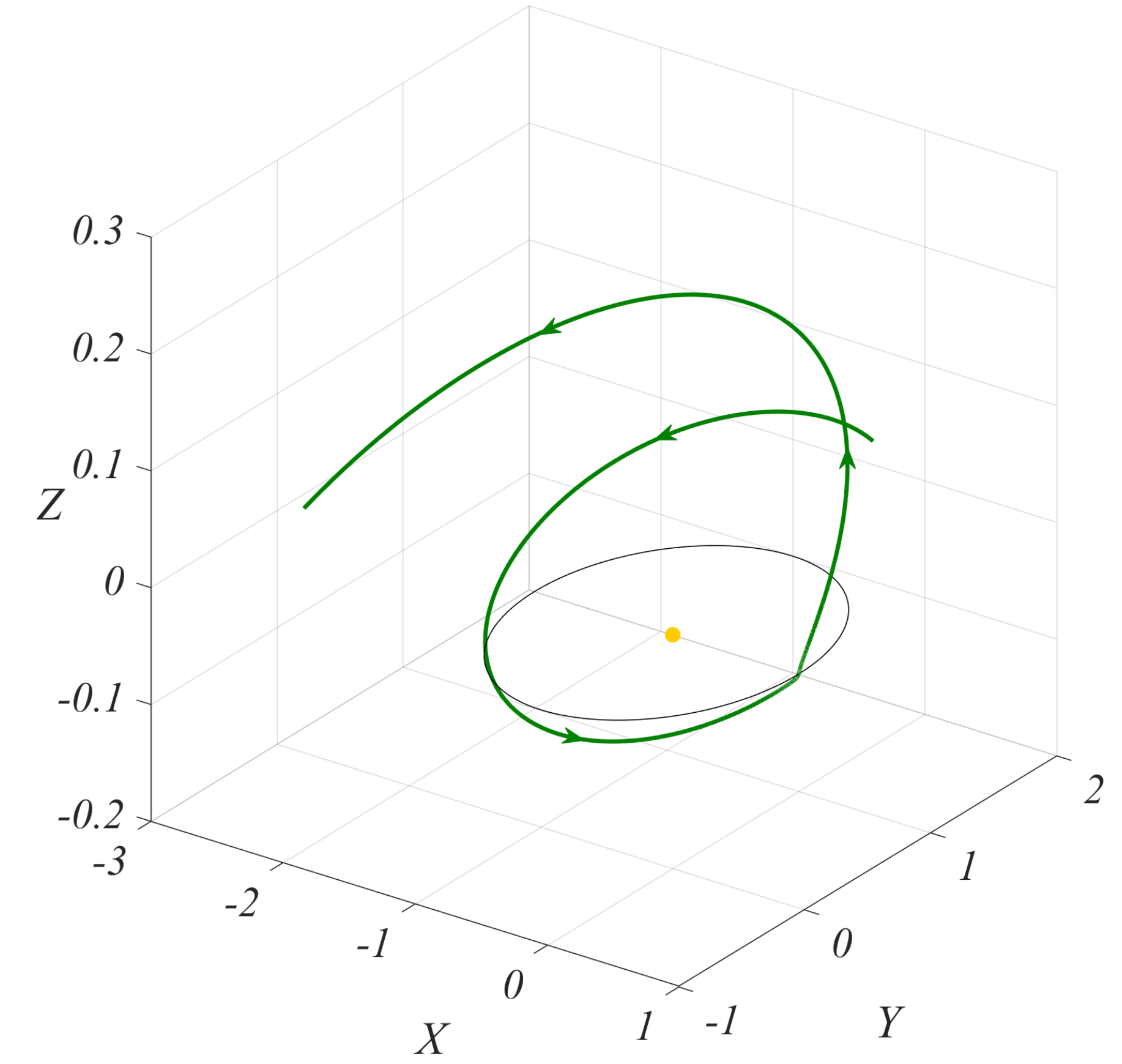}
	\caption{Physical orbit (reported in convenient aspect ratio for visual clarity) integrated backward in true anomaly up to $f=-2\pi$ and then forward up to $f=2\pi$ following the arrow heads for $x_0=1-\mu+ 1.921451079855507\cdot10^{-3}$ ($\approx 0.01$ AU of altitude), $y_0=z_0=f_0=0$, $p_{x,0}=0.2$, $p_{y,0}=1.8$, $p_{z,0}=0.6$. The corresponding KS-transformed quadruple precision initial data are:
	\begin{align*}
	u_1&=0.0438343595807618585658005372351908591\;,\\
	U_1&=0.0175337438323047538346610707549189101\;,\\
	U_2&=0.0702185800222737827036567637151165400\;,\\
	U_3&=0.0526012314969142580345362603111425415\;,\\
	\Phi&=-1.38220656687993415599045818608844111\;,\\
	u_2&=u_3=u_4=\phi=U_4=0\;.
    \end{align*}
	The yellow dot symbolizes the Sun, whereas the black thin style curve represents Jupiter's elliptic motion. \textbf{Top panel}: Cartesian version (backward in blue overlapping the forward in red) traced in the rotating-pulsating frame $Oxyz$. \textbf{Left bottom panel}: Cartesian backward (blue) and forward (red) trajectory in the inertial barycentric frame $OXYZ$ (Appendix \ref{subsecapp:xyztoXYZ}) with osculating heliocentric ellipses belonging to mutually inclined planes. \textbf{Right bottom panel}: KS integration of the inertial trajectory in the forward case. The close encounter is of hyperbolic type (fast): $\Gamma_0=1.4282186\le\Gamma_s\le\Gamma_s\vert_{\min d_2}<(3/2)\Gamma_0$ in $B(\mu^{\frac13})$.}
	\label{fig:backforw}
\end{figure}
\begin{figure}
	\centering
	\includegraphics[scale=0.34,trim={8.5cm 0cm 8.5cm 0cm},clip]{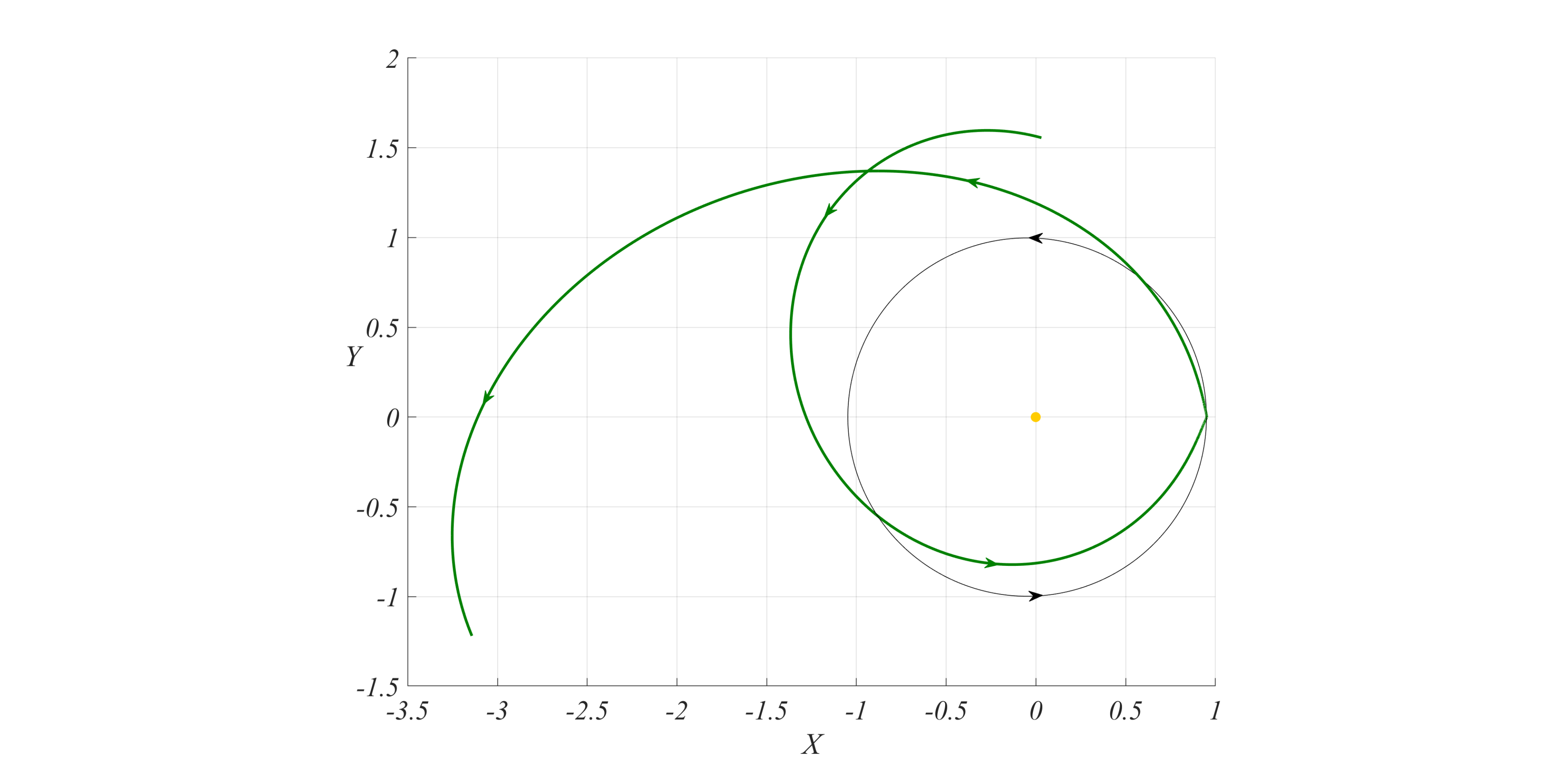}\\
	\includegraphics[scale=0.34,trim={3cm 5cm 3cm 5cm},clip]{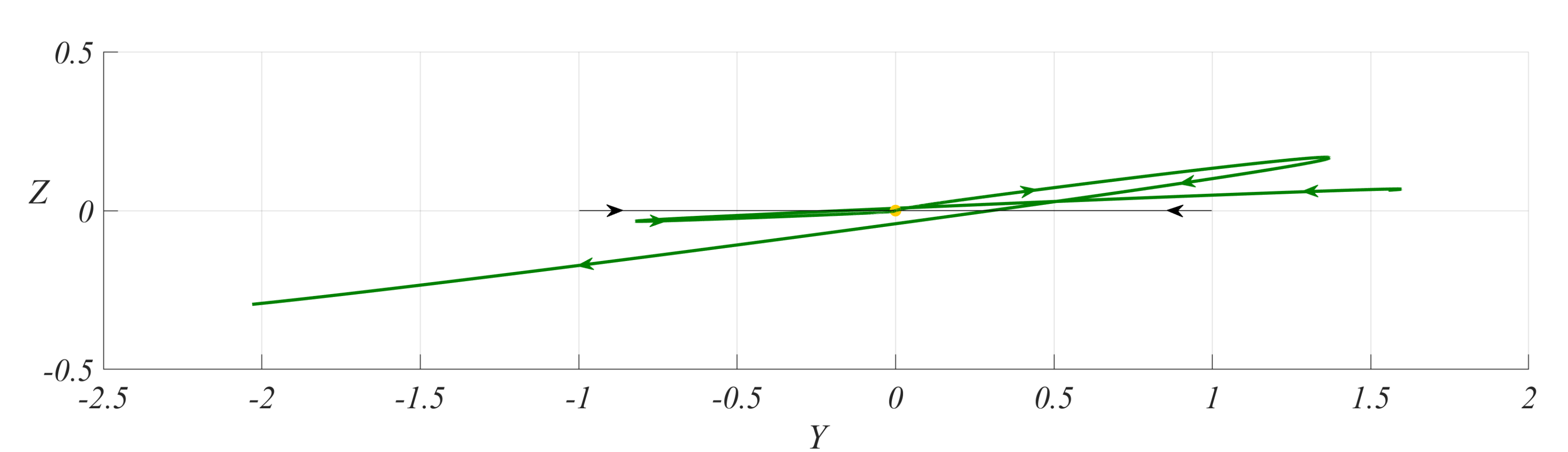}\\
	\includegraphics[scale=0.34,trim={3cm 5cm 3cm 5cm},clip]{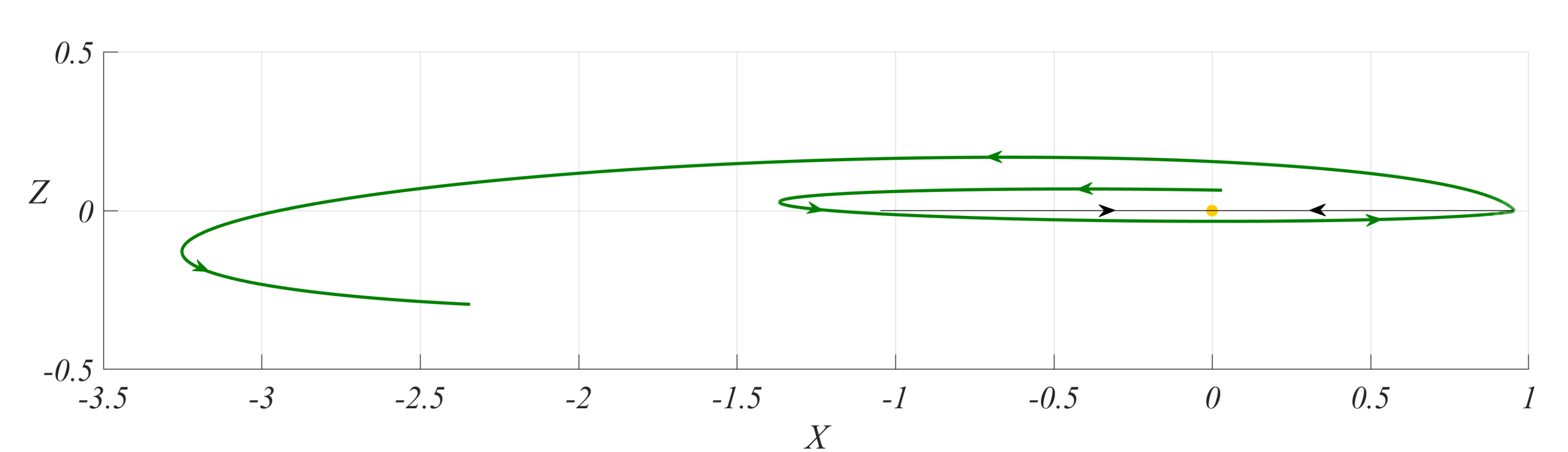}
	\caption{Projections on the coordinate planes of the right bottom panel of Fig. \ref{fig:backforw} on equal axis aspect ratio.}
	\label{fig:orbproj}
\end{figure}

In Figs. \ref{fig:backforw} and \ref{fig:orbproj}  we illustrate the particle's orbit: we set initial values so that the planetary flyby is of hyperbolic gravity-assist type and the heliocentric paths before and after the encounter are almost Keplerian ellipses.\\
\indent In Table \ref{table:KSCart} we compare the numerical integration of the close encounter represented in Figs. \ref{fig:backforw} and \ref{fig:orbproj} using both the Cartesian and the KS regularized equations of motion on a smaller true anomaly range, for different values of the integration steps. Precisely, using as initial condition the point of closest approach, we
integrate the equations of motion backward up to $f_{-}\approx-0.5$, and then forward up to $f_{+}\approx 0.5$, since
$f_{+}-f_{-}\approx1$ represents approximately the interval of the true anomaly in which the
close encounter takes place. We compare both the conservation
of the extended Hamiltonians $\widehat{\mathcal{H}}=\mathcal{H}+\Phi$, $\widehat{\mathscr{K}}$ and the value of $\|r\|$, $r=(x,y,z)$, at the end of the two numerical integrations. Since the numerical integration of the  Cartesian equations of motion is carried out with a fixed step $\Delta f$, while the numerical integration of the regularized equations of motion is carried out with a fixed step $\Delta s$, in order to compare the outputs of the two numerical integrations at the same values of $f$ we add a further iteration to the Cartesian numerical
integration to reach the final value of $f$ obtained with
the KS numerical integration. 
We note the sharp advantage of the KS algorithm in terms of computational efficiency, providing a sharp reduction of the number of iterations needed to maintain a high level of accuracy in the representation of the trajectory around $P_2$.\\
We finally argue that, to maximize the computational efficiency in a long simulation, one should implement as usual a switching tool which works with the $\mathcal{H}$-equations out of a ball centered at $P_2$, say the body's Hill's sphere, and the $\mathcal{K}$-equations once inside.\\
\indent As conclusive remark, since $\mathcal{K}$ and $l$ are smooth functions in a neighborhood of $P_2$, we get that the invariances given by $\mathcal{K}(u,\phi,U,\Phi)=0$, $l(u,U)=0$ are numerically satisfied with high precision, and eventually the machine precision is quickly reached after lowering $\Delta s$ by about a factor of a hundredth (in a convergence profile, the rate of decay reflects the accuracy of the method, i.e., $\mathcal{O}(\Delta s^6)$).

\begin{table}[H]
	\[
	\begin{array}{lccc}
	\toprule
	\Delta f \text{ \textit{(Cart.)}} & \|r(-0.5066821124431412)\| & \|r(0.4961307051398083)\| & \textit{\# iter.} \\
	\midrule
	2\pi\cdot10^{-6} & 0.8553075048550535 & 0.9760051057296899 & 240244\\
	2\pi\cdot 10^{-5} & 0.8553075048542582 & 0.9760051057288172 & 24026\\
	2\pi\cdot 10^{-4} &  0.8553060796173549 & 0.9760054080001320 & 2404 \\
	2\pi\cdot 10^{-3}\;(*) & 0.8248588821498852 &
	0.9897100124542644 & 241\\
	\midrule
	\Delta s \text{ \textit{(KS reg.)}} & \|r(f(-3.7\pi))\| & \|r(f(3.5\pi))\| & \textit{\# iter.}\\
	\midrule
	\pi\cdot10^{-4} & 0.8553075048550521 & 0.9760051057296942 & 109000 \\
	\pi\cdot10^{-3} & 0.8553075048550521 & 0.9760051057296942 & 10900\\
	\pi\cdot 10^{-2} & 0.8553075048550522 & 0.9760051057296968 & 1090\\
	\pi\cdot 10^{-1} & 
	0.8553075050607468 & 0.9760051591505222 & 109\\
	\bottomrule
	\end{array}
	\]
	\[
	\begin{array}{lcc}
	\toprule
	\Delta f \text{ \textit{(Cart.)}} & |\widehat{\mathcal{H}}(-0.5066821124431412)| &
	  |\widehat{\mathcal{H}}(0.4961307051398083)|\\
	\midrule
	 2\pi\cdot10^{-6} & 
	 1.0417562295\cdot 10^{-18} & 1.0277827090\cdot 10^{-18}\\
	 2\pi\cdot 10^{-5} & 9.3757489321\cdot 10^{-13} & 7.9843639352\cdot 10^{-13}\\
 	2\pi\cdot 10^{-4} & 1.1893484533\cdot 10^{-7} & 8.5748939646\cdot 10^{-7}\\
	 2\pi\cdot10^{-3}\;(*) & 
	 8.0281428133\cdot 10^{-2} & 0.10590853333\\
	\midrule
	\Delta s \text{ \textit{(KS reg.)}} & |\widehat{\mathscr{K}}\,(-3.7\pi)| & |\widehat{\mathscr{K}}\,(3.5\pi)|\\
	\midrule
	 \pi\cdot 10^{-4} & 1.3746151644\cdot 10^{-27} &
	 1.3290033656\cdot 10^{-28}\\
	 \pi\cdot10^{-3} & 
	 1.3738069068\cdot 10^{-21} & 1.3119148531\cdot 10^{-22}\\
	 \pi\cdot 10^{-2} & 1.3654070424\cdot 10^{-15} &
	 1.1227698042\cdot 10^{-16}\\
	 \pi\cdot10^{-1} & 1.2545211218\cdot 10^{-9} & 
	 3.0569361253\cdot 10^{-10}\\
	\bottomrule
	\end{array}
	\]
	\caption{Comparison between Cartesian and KS integration for four consecutively increased step sizes in a small neighborhood of $f=s=0$ (close approach). The propagations are performed backward in time up to 
		\[
		f(-3.7\pi)=-0.506682112443141208003735413674982089\;,
		\]
	then forward up to
	\[
	f(3.5\pi)=0.496130705139808336532715403656106249\;,\]
	according to choices of $s$ such that the true anomaly interval is almost symmetric with respect to the origin and sample nodes are multiple integers of every $\Delta s$ considered. For each case the step size is eventually adapted in order to evaluate the singular and regularized solution at same corresponding times (see text). The asterisk in the last Cartesian experiment indicates failure of the adopted numerical method (explicit RK6) to compute the orbit accurately. \textbf{Top panel}: norm of the solutions (rounded to 16 significant digits) and total number of iterations. \textbf{Bottom panel}: degree of conservation of the singular extended Hamiltonians (rounded to 11 significant digits), where $\widehat{\mathcal{H}}(0)=\widehat{\mathscr{K}}\,(0)=0$.}
	\label{table:KSCart}
\end{table}

\subsection{Detection of multiple close encounters with RFLIs}
\label{sec:RFLI}
Regularized fast Lyapunov indicators, introduced in
\cite{CLSF,LGF11,GL2013} (see also \cite{GL23}), are defined for the ER3BP
by:
\begin{equation}
\text{RFLI}(\xi_0,w_0;f_0,F) = { \max_{ f_0\leq f \leq F}}
\log_{10} {\Norm{w(s(f))}\over \Norm{w_0}}\;,
\label{flireg}
\end{equation}
where:
\begin{itemize}
\item $\xi_0=(r_0,r_0^{\prime})$ is the initial position and initial velocity
  vector given in Cartesian coordinates; $f$ is the true anomaly of the elliptic
  motion and $f_0$, $F$ its initial and final values;  $s(f)$ denotes the proper time expressed
as a function of $f$;

\item $w(s)$ is the solution of the variational equations
  of Hamiltonian ${\cal K}$ obtained from the variational matrix \eqref{eqn:matX} with initial conditions $w(s(f_0))=w_0$, 
  computed for an orbit with initial
conditions $u(f_0)$, $\phi(f_0)$, $U(f_0)$, $\Phi(f_0)$ provided by a local inversion
of the KS transformation (cf. \eqref{eqn:invpibarrel}--\eqref {eqn:invpi} in Appendix \ref{subsecapp:implem}; by changing the local inversion map, a
change in the initial tangent vector should be applied accordingly, see \cite{GL18}; however, in
the experiments below we do not need to apply any local inversion map for the choice of the initial tangent vector and it suffices to fix the same vector in $\mathbb{R}^8$ for all the integrations, see \cite{GL18} for more details).

\end{itemize}
According to \cite{GL14,LG16,GL17,GL18}, we also consider the modified indicator mFLI
which is suitable to detect the fast close encounters of the ER3BP:
\begin{equation}
{\rm mFLI}_\chi(\xi_0,w_0;f_0,F) = \max_{ f_0\leq f \leq F} \bigintsss_0^{s(f)} 
{\chi(q(f(s))){w(s)\cdot\displaystyle\frac{\text{d}w}{\text{d}s}(s)\over \Norm{w(s)}^2}}
\text{d}s\;,
\label{modflireg}
\end{equation}
where:
\begin{itemize}

\item $q(f)=r(f)-(1-\mu,0,0)$ represents the solution
of the equations of motion with initial conditions $\xi_0$; 

\item   $f(s)$ is the true anomaly of the elliptic motion expressed
as a function of $s$;

\item $\chi(q)$ is a function depending on a parameter $\lambda>0$:
\begin{equation}
\chi(q)=
\begin{cases}
\displaystyle \qquad\qquad\qquad1 & \displaystyle\text{if }\Norm{q}\leq {\lambda \over 2}\;, \\
\displaystyle {1\over 2}\left[\cos\left(\left({\Norm{q}  \over \lambda}
-{1\over 2}\right)\pi\right)+1\right]  &  \displaystyle\text{if } {\lambda \over 2} <  \Norm{q} 
\leq  {3\over 2}\lambda\;,\\
\displaystyle \qquad\qquad\qquad 0  & \displaystyle\text{if } \Norm{q} > {3 \over 2}\lambda\;,
\end{cases}
\label{chiq}
\end{equation}
 where $\Norm{q}=d_2$ is the Cartesian distance $\Norm{P-P_2}$ and $\lambda$ is a parameter, that for close encounters is conveniently set as from
  $1$ to $2$ Hill's radii $r_h$ of $P_2$.\\
\end{itemize}

In the following experiments, we aim to detect multiple fast close encounters for small values of $\mu<1/10$ using the indicators RFLI and mFLI above. We consider the Sun-Earth spatial ER3BP ($\mu=3.00347\cdot 10^{-6}$, $\varepsilon=0.0167$).\\	
\indent We start from a planar reference orbit characterized by a remarkably fast close encounter with the Earth, i.e., with $\Gamma_0>0$ sufficiently large entering the Hill's sphere; then we suitably vary initial conditions $(x,x^{\prime})$ in a neighborhood of those of the reference orbit to explore the nearby phase space. We take initial conditions corresponding to a $2$:$3$ mean-motion resonant orbit in the Keplerian approximation intersecting the Earth's orbit, with pericenter interior to the Earth's trajectory and aligned with the planet's one on the $X$-axis of the inertial frame. We synchronize times in order to have an exact collision in the future at the first intersection on the Keplerian ellipses by starting with $P$ and $P_2$ located well apart, say, for convenience, at a relative distance $\Norm{P-P_2}(f_0)=1/\sqrt{2}>\Norm{P_2-P_1}(0)/2=(1-\varepsilon)/2$ expressed in the inertial frame. The resulting orbit is integrated in Fig. \ref{fig:reforb}, complemented by $\Gamma_s$ during the Hill's sphere crossing.
\begin{figure}
	\centering
	\includegraphics[scale=0.09]{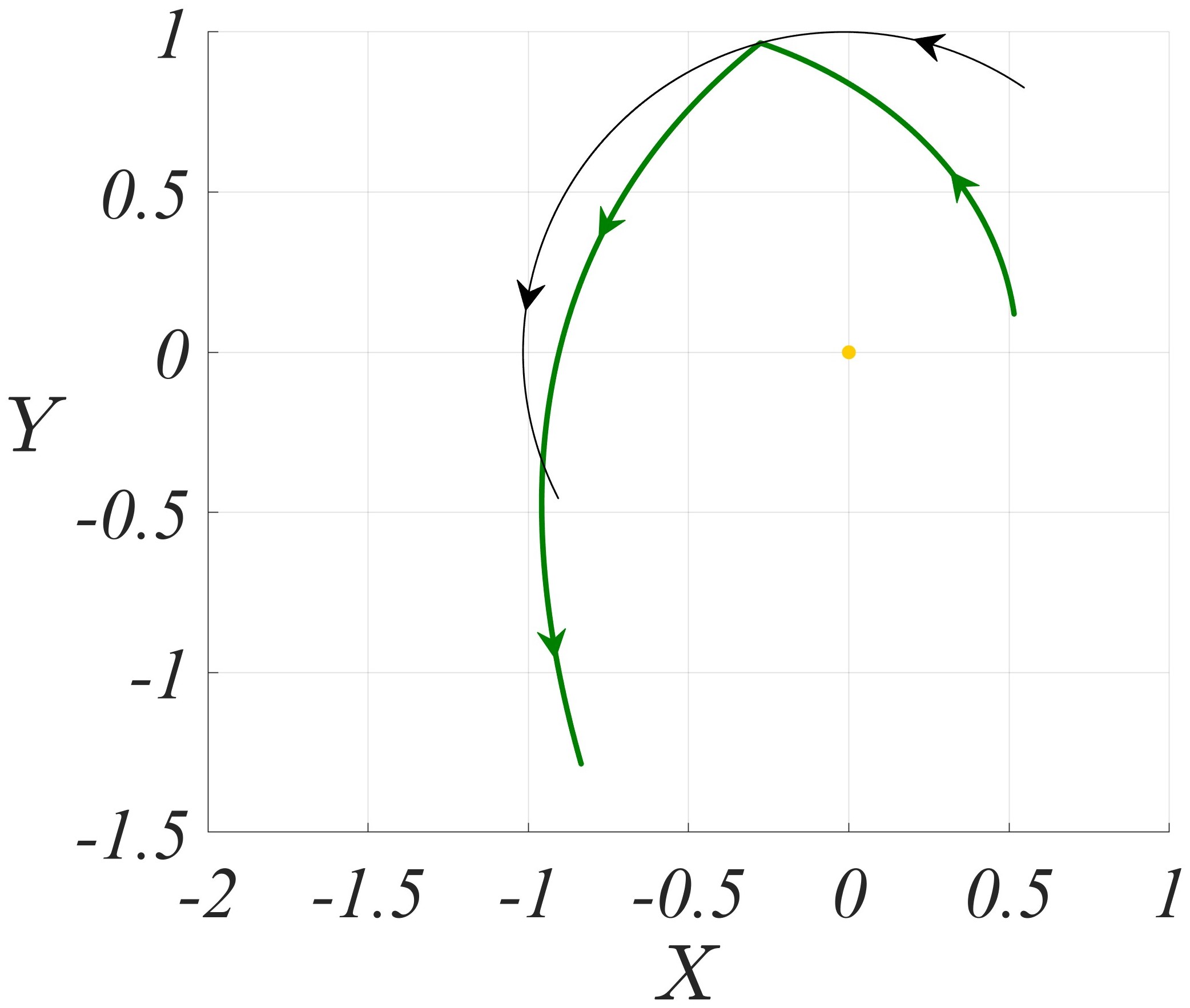}\hspace{1cm}
	\includegraphics[scale=0.09]{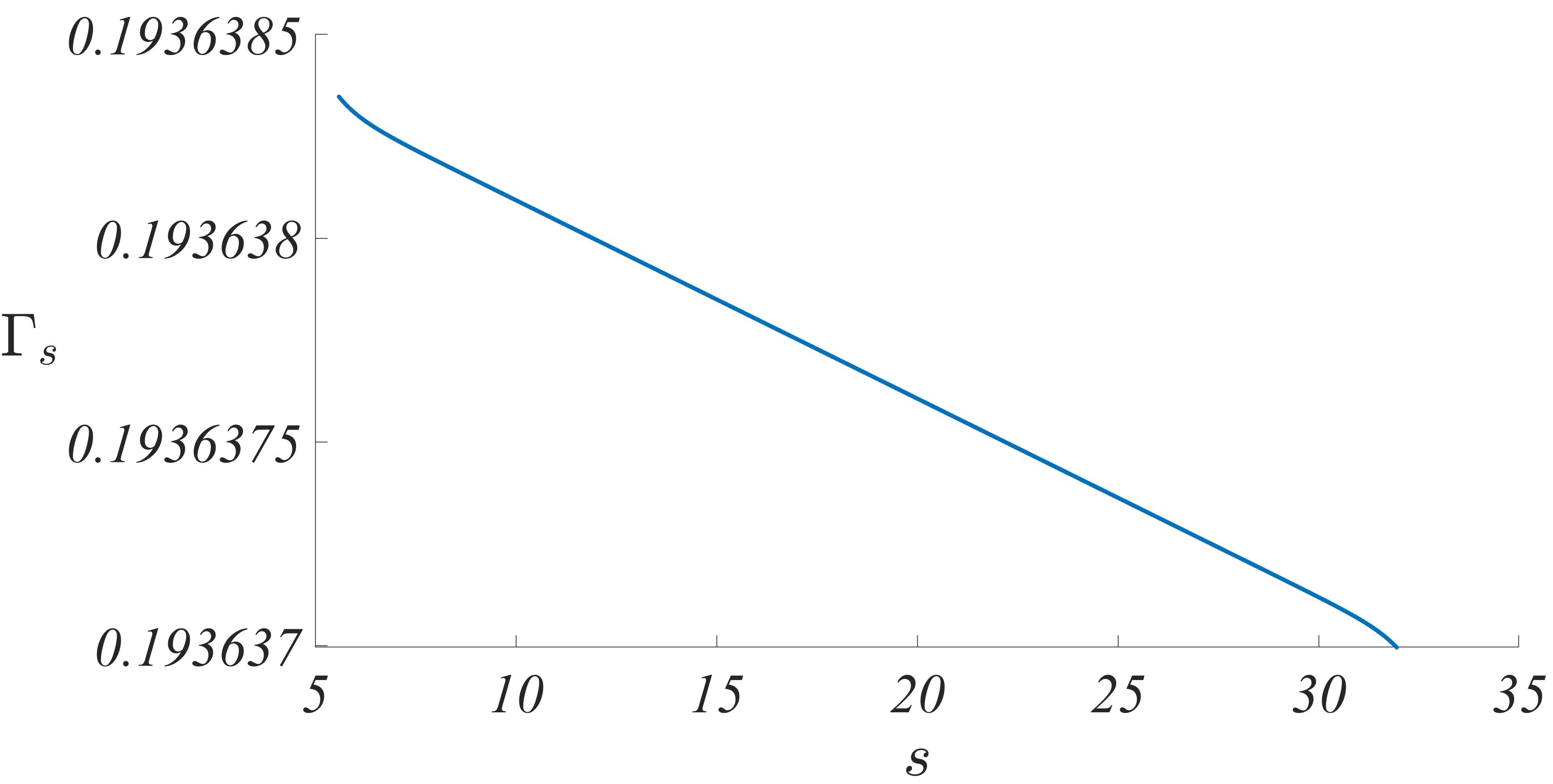}
	\caption{Reference orbit for mFLI/RFLI computations in the Sun-Earth system. \textbf{Left panel}: Cartesian orbit in the inertial barycentric frame $OXYZ$ (green) approaching the Earth (black thin) at $f_c\approx106^{\circ}$ from $f_0=0.9862623425908257$. The initial orbital elements of the test particle are: $a_P(f_0)=1.3103706971044482$ (semi-major axis), $e_P(f_0)=0.6$ (eccentricity), $f_P(f_0)=0.22823102675215523$ (true anomaly), $i_P(f_0)=\omega_P(f_0)=\Omega_P(f_0)=0$ (inclination, argument of pericenter, longitude of the ascending node). The yellow dot represents the Sun. \textbf{Right panel}: trend of $\Gamma_s>0$ throughout the fast close encounter (note, indeed, that $\Gamma_s\approx\Gamma_0\in[\Gamma_0/2,3\Gamma_0/2]$).}
	\label{fig:reforb}
\end{figure}

At this stage, we compute the indicator \eqref{modflireg} on a grid of initial conditions $(x,x')$ equally spaced in the two coordinates, while keeping fixed $y$, $z$, $y'$, $z'$ and setting $\Phi=-\mathcal{H}(x,y,z,p_1,p_2,p_3,f_0)$ to perform the regularization. We slightly move away from the reference orbit until the mFLI portrays fast close encounter structures in the phase space and we therein increase the resolution. Additionally, for a better visualization and to include multiple close encounter orbits, we extend the integration interval from $f_0$ up to $F=f_c+15\pi+2$, with $f_0$, $f_c$ given in the caption of Fig. \ref{fig:reforb}. The outcome reproducing fast close encounter loci is reported in the top panel of Fig. \ref{fig:mFLIEarth}.
\begin{figure}
	\centering
	\includegraphics[scale=0.46]{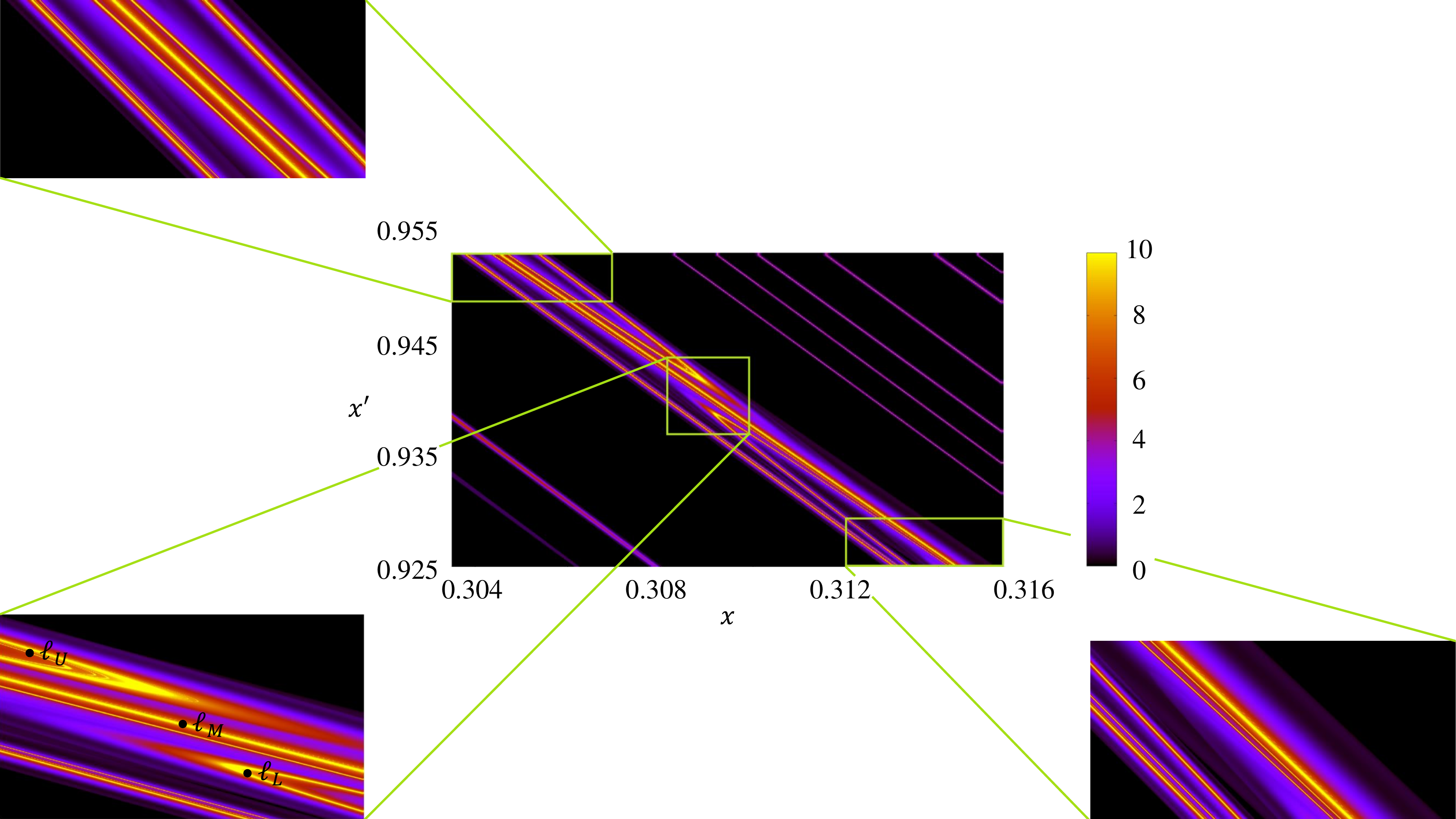}\\
	\vspace{5mm}
	\includegraphics[scale=0.46]{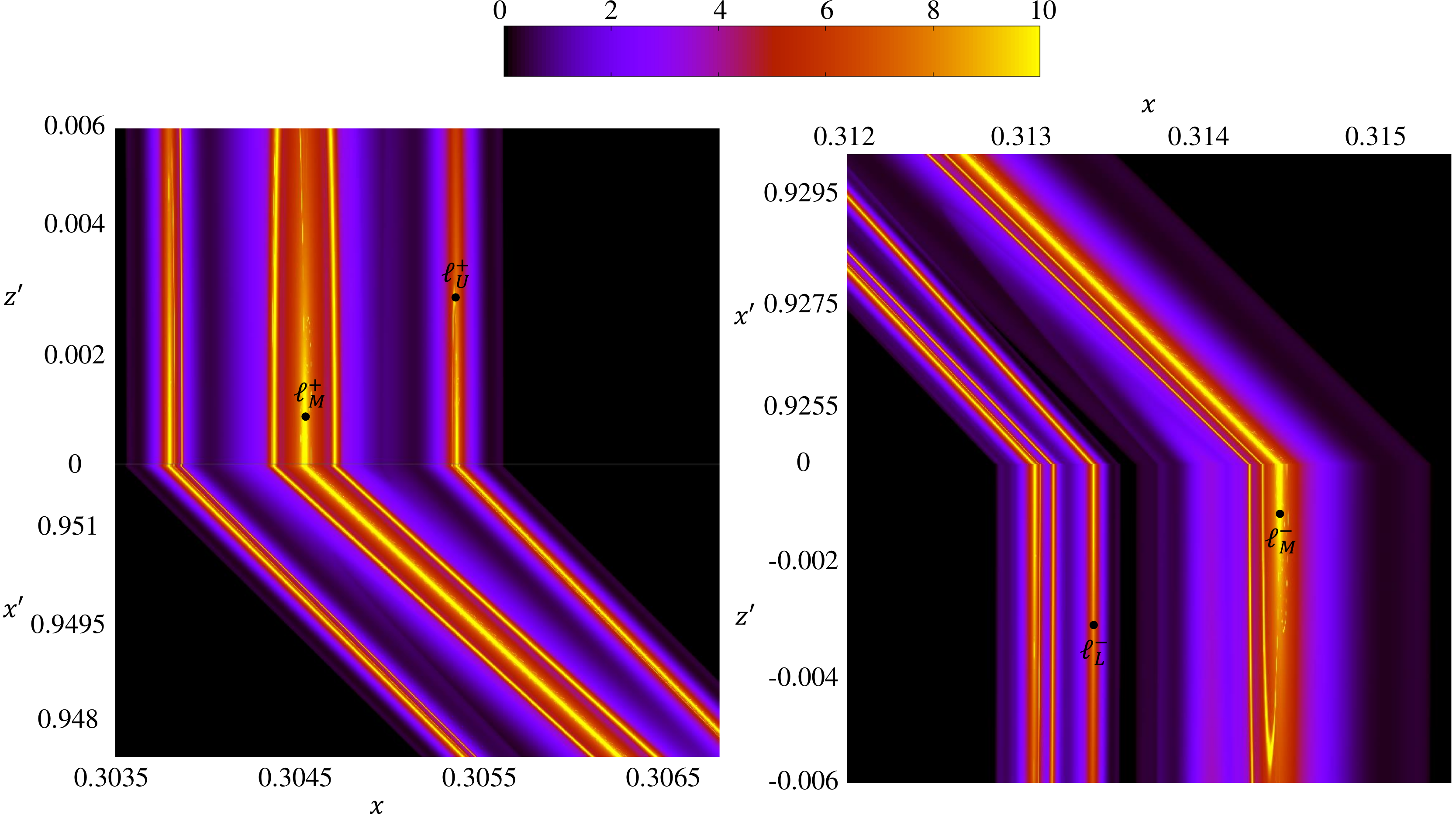}
	\caption{mFLI$_{\chi}$ charts of the Sun-Earth ER3BP over $1000\times1000$ regularly spaced initial conditions close to those of Fig. \ref{fig:reforb}. Here $\lambda=r_H$ in $\chi$ and $w_0\in\mathbb{S}^7$. \textbf{Top panel}: planar section with three magnifications of the close encounter lobes emerging diagonally in the figure. In the central magnification we discern three main lobes (middle, lower, upper), on which sample orbits $\ell_M$, $\ell_L$, $\ell_U$ are taken, respectively. \textbf{Bottom panel}: spatial sections merged to corresponding above planar magnifications of the main lobes. The effect is a continuation of the close encounter structures along the $z'$-axis. Sample orbits $\ell_M^+$, $\ell_U^+$ are taken on the positive $z'$ extensions of the main lobes (left), while sample orbits $\ell_M^-$, $\ell_L^-$ are taken on the negative $z'$ extensions of the main lobes (right).}
	\label{fig:mFLIEarth}
\end{figure}

The spatial continuation of these loci can be inspected by integrating orbits with nonzero initial $z'$ (or $z$) starting from the planar structures. In the bottom panel of Fig. \ref{fig:mFLIEarth} we compute such continuations in the $(x,z')$ section for positive and negative $z'$ upon setting, respectively, $x'=x'_{\text{max}}$ and $x'=x'_{\text{min}}$ for all the grid points, where $x'\in [x'_{\text{min}},x'_{\text{max}}]$ in the planar chart.\\
\indent The lobes depicted in Fig. \ref{fig:mFLIEarth} correspond to single or multiple close encounter orbits, as well as to deeper or less deep encounters. We explore the nature of these lobes by considering the sample orbits (indicated in the same figure) $\ell_M$, $\ell_L$, $\ell_U$ when $z'=0$, $\ell_M^+$, $\ell_U^+$ when $z'>0$ and $\ell_M^-$, $\ell_L^-$ when $z'<0$. We plot in Fig. \ref{fig:sampleorbs} RFLI$(s)$, mFLI$(s)$, $\Gamma_s$ for $s\in[0,s(F)]$ and $d_2(s)$ in a neighborhood containing the Hill's sphere for each selected orbit.
\begin{figure}
	\centering
	\includegraphics[scale=0.1672]{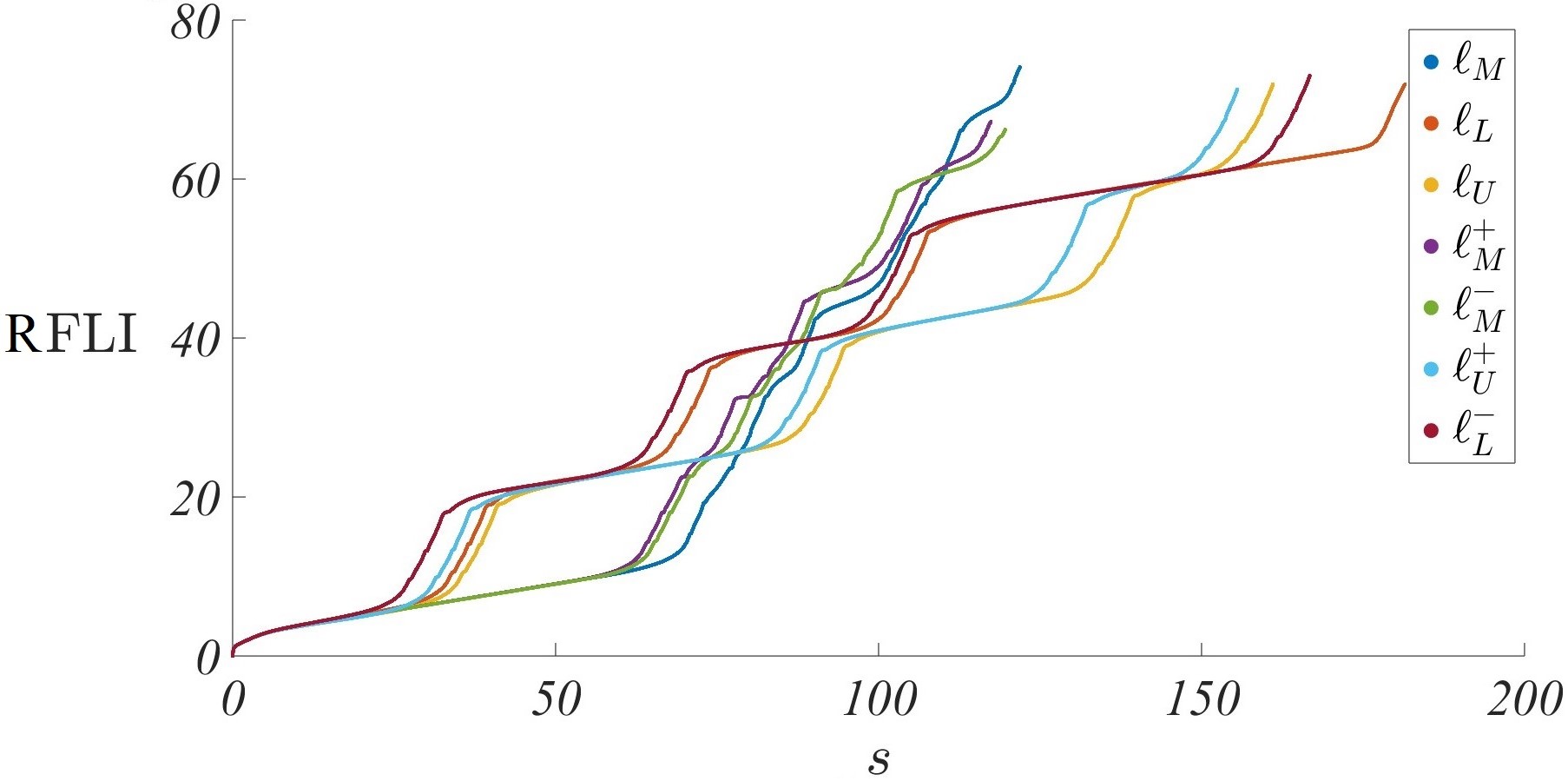}
	\includegraphics[scale=0.1672]{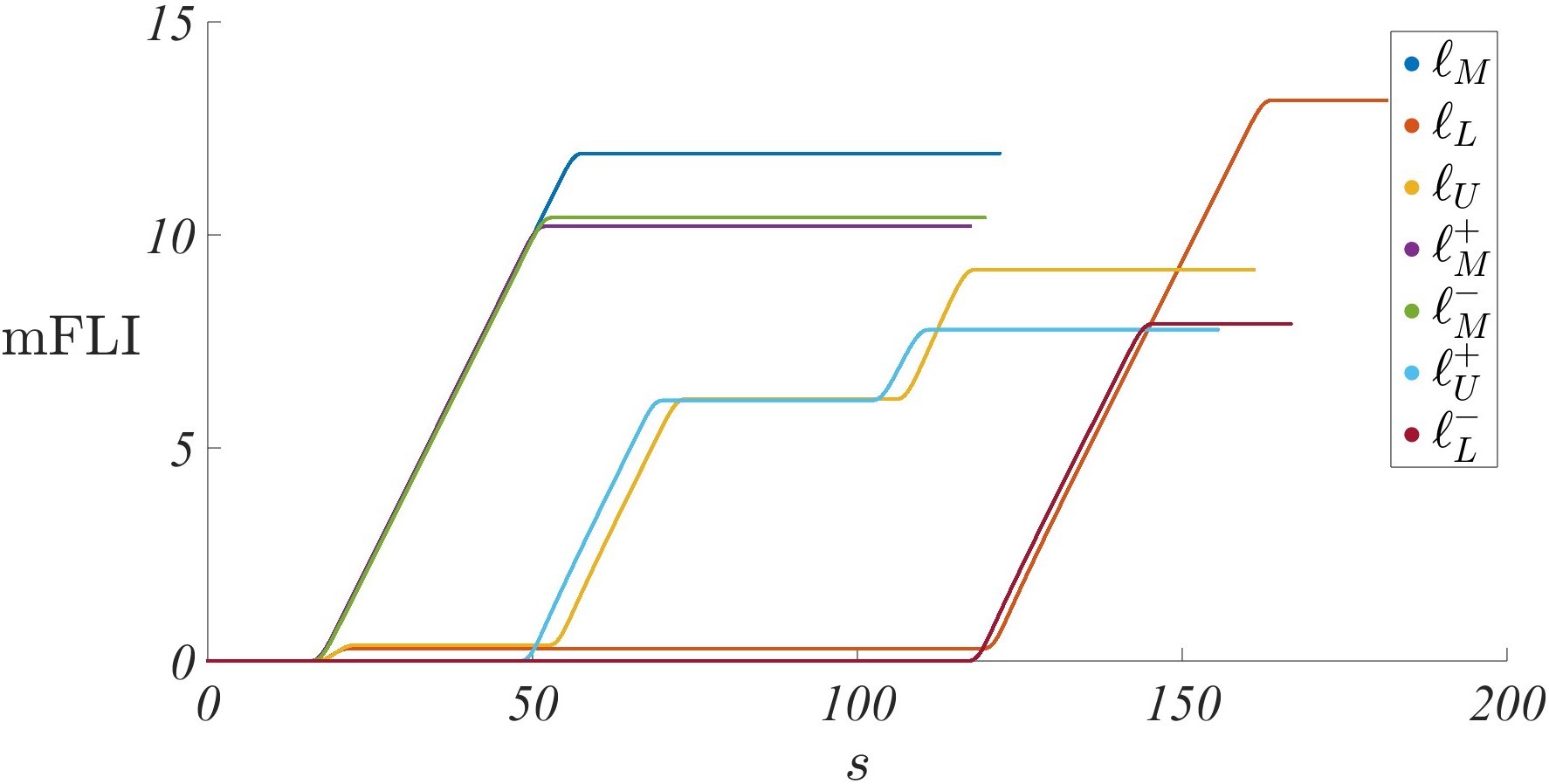}\\
	\vspace{4mm}
	\hspace{2.5mm}\includegraphics[scale=0.1672]{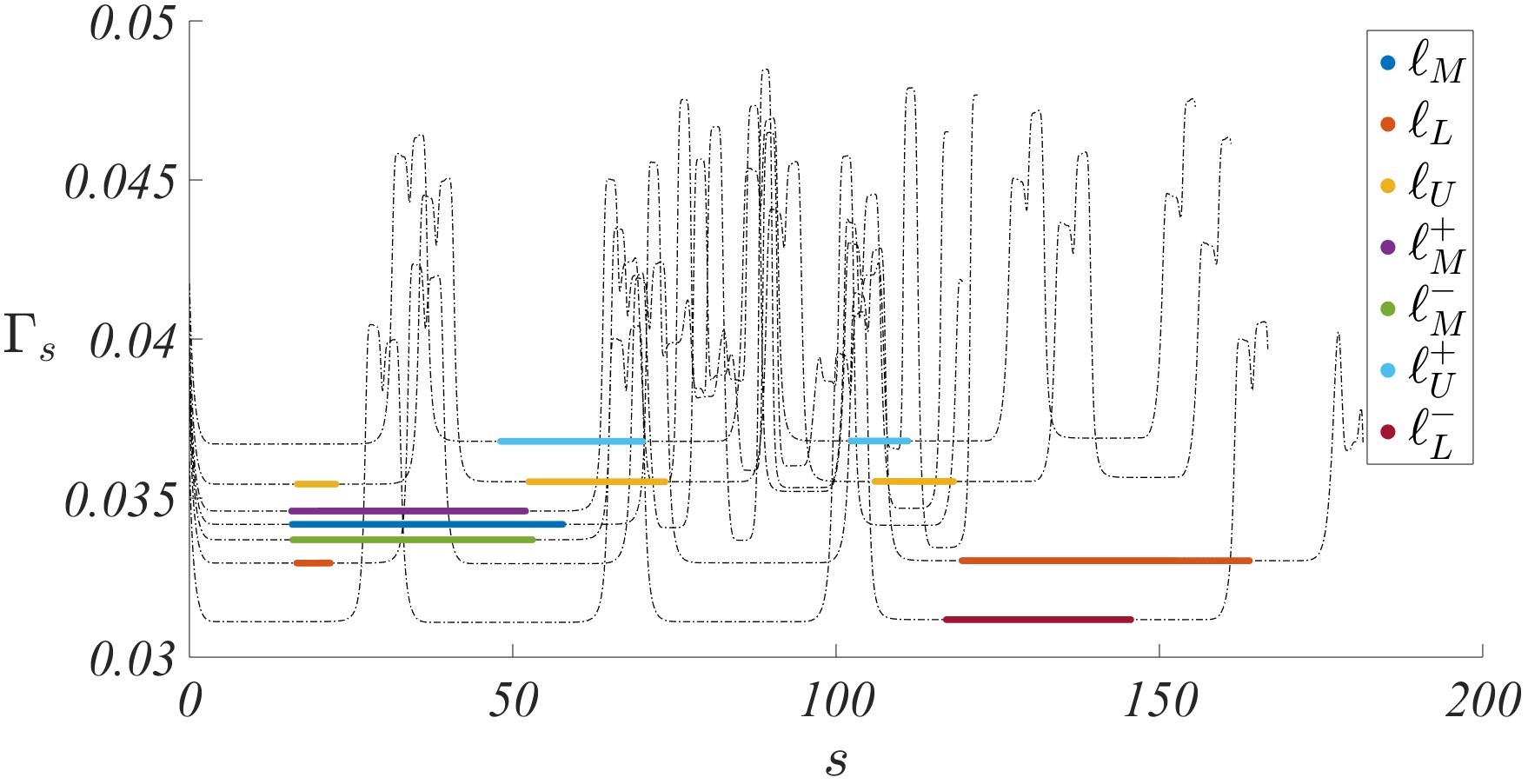}\hspace{1mm}
	\includegraphics[scale=0.1672]{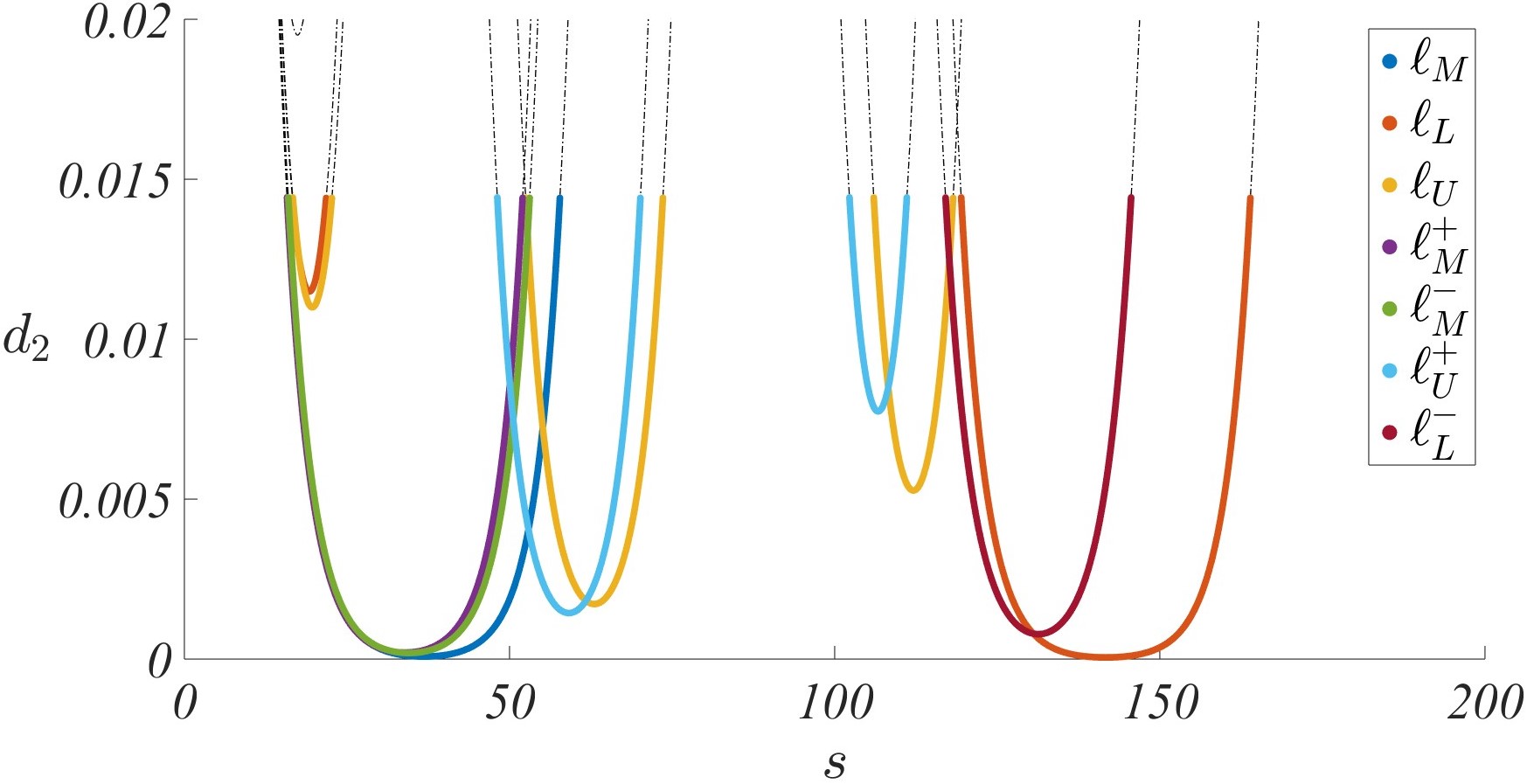}
	\caption{RFLI, mFLI, $\Gamma_s$ and $d_2$ as functions of $s$ for the sample orbits of Fig. \ref{fig:mFLIEarth}. \textbf{Top left panel}: RFLI$(s)$. \textbf{Top right panel}: mFLI$(s)$. \textbf{Bottom left panel}: $\Gamma_s$ along the whole propagation (dashed) and in $B(\mu^{\frac13})$ (solid colored). \textbf{Bottom right panel}: $d_2(s)$ in $B(\mu^{\frac13})$ (solid colored) and graph continuations outside $B(\mu^{\frac13})$ (dashed).}
	\label{fig:sampleorbs}	
\end{figure}
\\
In the top left plot, we observe the strong divergence of orbits with close initial conditions for all the orbits pinpointed by a growing RFLI.\\
In the top right plot, we can distinguish between orbits facing either 1, 2 or 3 close encounters with the Earth by looking at their jumps in the mFLI. Qualitatively, we can better determine the strength of the approaches by comparing their minimum distance $d_2$ in the bottom right plot: we find that the orbits on the middle lobe (label `$M$') have a single strong close encounter (small $\min d_2$ and high mFLI jump); the orbits on the lower lobe (label `$L$') have a double close encounter on the plane, the first weak (high $\min d_2$ and low mFLI jump) and the second strong, whilst a single medium-strong close encounter going into the space; the orbits on the upper lobe (label `$U$') have a triple close encounter on the plane, the first weak, the second and third medium, whilst a double close encounter going into the space, the first medium and the second weak. According to this analysis, the development of the dynamics in the third spatial dimension appears to reduce the chance for fast close encounters, which seems intuitive given the extra degree of freedom coming into play. Nonetheless, it is worth mentioning that an intrinsic limitation is due to the finite resolution of the portraits in Fig. \ref{fig:mFLIEarth}: more refined grids reveal further distinct branches inside the visible lobes with possible different dynamical features in terms of close encounter detection.\\
Finally, in the bottom plot of Fig. \ref{fig:sampleorbs}, we verify the hyperbolicity of each encounter by looking at its variation of $\Gamma_s$ in the Hill's sphere (almost null, see Fig. \ref{fig:reforb}), which fulfills the properties of Section \ref{sec:fastcloseencounters}. The occurrences, position and length of each colored line provide information on the number, moment and duration in proper time of the corresponding encounters along every orbit, where by `duration' we mean the time spent in the Hill's sphere.\\
\indent The last result that we report is Fig. \ref{fig:Tiss}.
\begin{figure}
	\centering
	\includegraphics[scale=0.52]{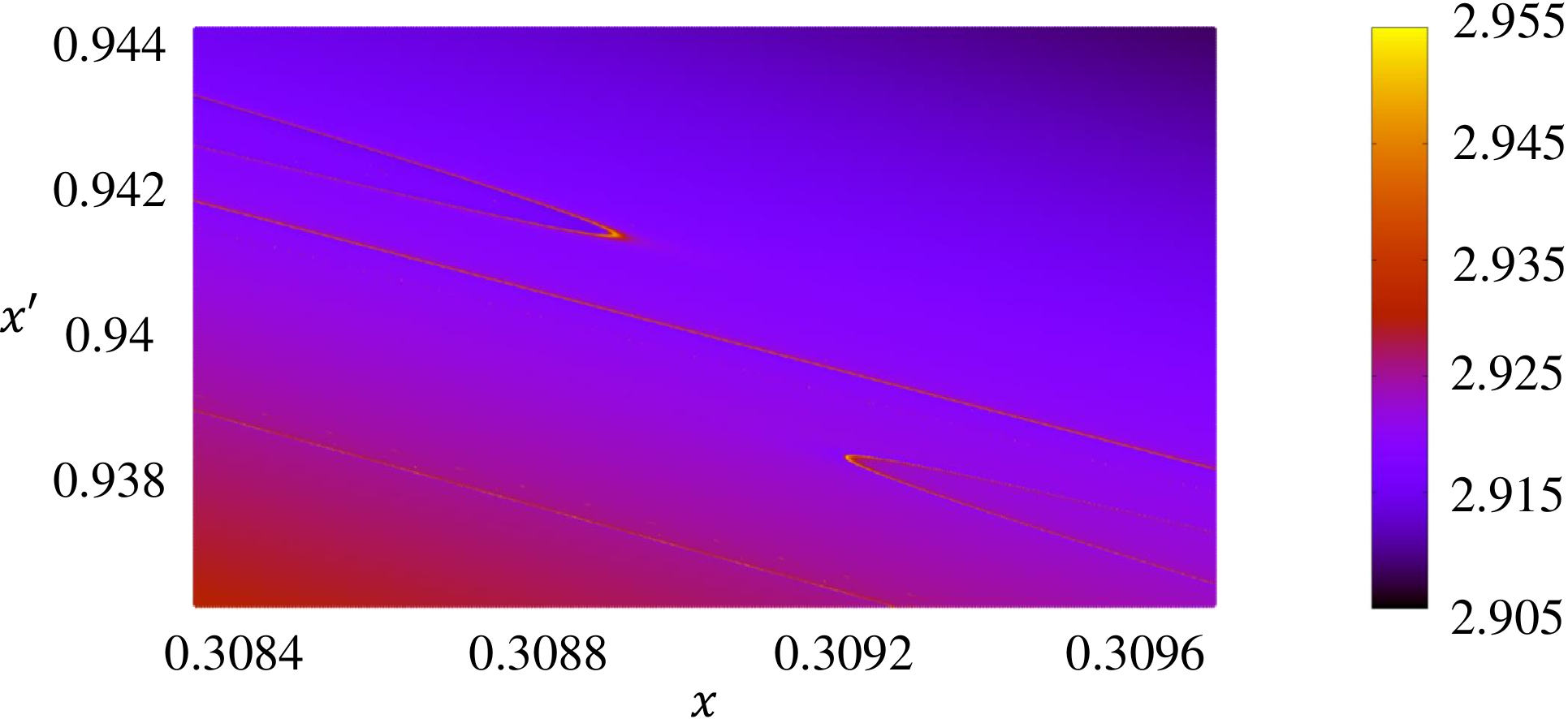}
	\caption{$\mathcal{T}(F)$ from \eqref{eqn:Tiss} over the mFLI central magnification in the planar case.}
	\label{fig:Tiss}
\end{figure}
On the same grid $(x,x')$ of the central zoom-in of top panel of Fig. \ref{fig:mFLIEarth}, we compute the Tisserand parameter 
\begin{equation}
\label{eqn:Tiss}
	\mathcal{T}=\frac{1}{a_P}+2\cos i_P\sqrt{a_P(1-e_P^2)}
\end{equation}   
at final value $f=F$, which is subject to a more significant variation when close approaches occur. Comparing to Fig. \ref{fig:mFLIEarth}, we can notice that only fewer and less detailed structures are replicated, like the three main lobes identified above, corresponding to the deepest encounters. This further supports the effectiveness of the indicator \eqref{modflireg}.\\

\vskip 0.5 cm

\noindent\textbf{Acknowledgements.} {The author M.R. acknowledges the project ``AIxtreme: Physics-driven AI approaches for predicting extreme weather and space weather events'' funded by Compagnia di San Paolo Foundation. The author M.G. acknowledges the project MIUR-PRIN 2020RC5H82 titled ``Modern challenges of Celestial Mechanics: from the fundamental theorems to the new models of Planetary Sciences and back''.}

\appendix

\section{Appendix}
\label{appendix:chcoord}
\subsection{From the rotating-pulsating to the inertial reference frame}
\label{subsecapp:xyztoXYZ}
{
In the ER3BP the primary and the secondary bodies rotate non-uniformly around their common center of mass and have constantly varying relative distance. The rotating-pulsating frame $Oxyz$ then represents the straightforward extension of the synodic frame of the circular case by:
\begin{enumerate}[label=(\roman*)]
	\item 
	rotating by $f(t)$, thus with non-constant angular speed $\dot{f}$;
	\item 
	pulsating in order to rescale lengths by a factor $1/\varrho(f(t))$, where
	\begin{equation}
		\label{eqn:2BpolER3BP}
		\varrho(f)=\frac{1-\varepsilon^2}{1+\varepsilon\cos f}
	\end{equation}
	is the primaries' Keplerian ellipse according to units in Section \ref{sec:intro}.
\end{enumerate} 
In this way $P_1$ and $P_2$ appear at rest and $\Norm{P_2-P_1}=1$.\\
The two prescriptions translate mathematically in the application of a rotation matrix $\mathscr{R}(f)\in SO(3)$ and the scaling factor $\varrho(f)$ to a vector $r=(x,y,z)\in\mathbb{R}^3$ in $Oxyz$ to retrieve a vector $R=(X,Y,Z)\in\mathbb{R}^3$ in the inertial frame $OXYZ$:
\begin{equation}
	\label{eqn:ER3BPRP}
	R=\varrho(f)\mathscr{R}(f)r\;,
\end{equation}
where
\begin{equation}
	\label{eqn:ER3BPRPmatr}
	\mathscr{R}(f)=
	\begin{pmatrix}
		\cos f & -\sin f & 0\\
		\sin f & \cos f & 0\\
		0 & 0 & 1
	\end{pmatrix}\;.
\end{equation}
}

\subsection{Implementation procedure}
\label{subsecapp:implem}
We rely to the following roadmap to conduct the regularization numerically:
\begin{enumerate}[label=(\roman*)]
\item \textbf{Initial conditions}. Given an initial Cartesian datum
  $(x_0,y_0,z_0,p_{x,0},p_{y,0},p_{z,0})$ at $f=f_0$, define:
	$$
	\Phi_0=-\mathcal{H}(x_0,y_0,z_0,p_{x,0},p_{y,0},p_{z,0},f_0)\;,\quad \phi_0=f_0\;.
	$$
	\item \textbf{Transformed initial conditions}.
	The correspondence between
	\begin{equation*}
	(u_1(0),u_2(0),u_3(0),u_4(0),\phi(0),U_1(0),U_2(0),U_3(0),U_4(0),\Phi(0))
	\end{equation*} 
	and
	\begin{equation*}
	(x(f_0),y(f_0),z(f_0),\phi(f_0),p_1(f_0),p_2(f_0),p_3(f_0),\Phi(f_0))
	\end{equation*}
	is clearly not one-to-one. In order to close the problem, it is sufficient to consider the following local pre-images of the map $\pi(u)$  \cite{Froeschle1970} extended to the momenta and that identically fulfill the condition on the bilinear form:
	\begin{equation}
		\label{eqn:invpibar}
		\begin{array}{c}
			\bar{\pi}_-^{-1}\colon T^*((\mathbb{R}^3\setminus\{(q_1,0,0)\colon q_1\ge 0\})\times\mathbb{T})\longrightarrow T^*((\mathbb{R}^4\setminus\mathscr{C})\times\mathbb{T})\cap \{l = 0\}
			\\
			\bar{\pi}_+^{-1}\colon T^*((\mathbb{R}^3\setminus\{(q_1,0,0)\colon q_1\le 0\})\times\mathbb{T})\longrightarrow T^*((\mathbb{R}^4\setminus\mathscr{C})\times\mathbb{T})\cap \{l= 0\}
		\end{array}
	\end{equation}
	such that:
	\begin{align}
		\label{eqn:invpibarrel}
		\begin{split}
		  \bar{\pi}_-^{-1}(q,\phi,\bar{p},\Phi)=\left(\pi_-^{-1}(q),\phi,2A(\pi_-^{-1}(q))^T
                  (\bar{p},0),\Phi\right)\\
			\bar{\pi}_+^{-1}(q,\phi,\bar{p},\Phi)=\left(\pi_+^{-1}(q),\phi,2A(\pi_+^{-1}(q))^T(\bar{p},0),\Phi\right)\\
		\end{split}\;,
	\end{align}
	where
	\begin{equation}
		\label{eqn:pbar}
	\bar{p}=(p_1,p_2-1+\mu,p_3)\;,
	\end{equation}
	and
	\begin{align}
		\label{eqn:invpi}
		\begin{split}
			\pi_-^{-1}(q)=\left(\frac{q_2}{\sqrt{2(d-q_1)}},\sqrt{\frac{d-q_1}{2}},0,\frac{q_3}{\sqrt{2(d-q_1)}}\right)\\
			\pi_+^{-1}(q)=\left(\sqrt{\frac{d+q_1}{2}},\frac{q_2}{\sqrt{2(d+q_1)}},\frac{q_3}{\sqrt{2(d+q_1)}},0\right)
		\end{split}
	\end{align}
	with $d=d_2=\sqrt{q_1^2+q_2^2+q_3^2}$.

	\item \textbf{Original solutions}. Numerically integrate the 
          Hamilton equations of the regularized Hamiltonian  $\mathcal{K}$ and retrieve the
          solutions of the Hamilton equations of the original Hamiltonian $\mathcal{H}$ with a simple
          projection:
	\begin{equation}
		\label{eqn:KSsyst}
		\begin{dcases}
			x_k=u_{1,k}^2-u_{2,k}^2-u_{3,k}^2+u_{4,k}^2+1-\mu\\
			y_k=2u_{1,k}u_{2,k}-2u_{3,k}u_{4,k}\\
			z_k=2u_{3,k}u_{1,k}+2u_{4,k}u_{2,k}\\
			p_{x,k}=\frac{1}{2\Vert u_k\Vert^2}(u_{1,k}U_{1,k}-u_{2,k}U_{2,k}-u_{3,k}U_{3,k}+u_{4,k}U_{4,k})\\
			p_{y,k}=\frac{1}{2\Vert u_k\Vert^2}(u_{2,k}U_{1,k}+u_{1,k}U_{2,k}-u_{4,k}U_{3,k}-u_{3,k}U_{4,k})+1-\mu\\
			p_{z,k}=\frac{1}{2\Vert u_k\Vert^2}(u_{3,k}U_{1,k}+u_{4,k}U_{2,k}+u_{1,k}U_{3,k}+u_{2,k}U_{4,k})\\
		\end{dcases}\;
	\end{equation}
	at the  sample points $f_k=f(s_k)$, $k=1,2,\ldots,K$, given by numerically
        integrating \eqref{eqn:fictrueanfofs}.
\end{enumerate}

\end{document}